\def\ve {\varepsilon}
\def\eps {\varepsilon}
\title{Motor Protein Transport Along Inhomogeneous Microtubules}
\author{S.~D.~Ryan \and Z. McCarthy \and M. Potomkin*}
\institute{S.~D. Ryan
			\at Department of Mathematics and Statistics, Cleveland State University, Cleveland, OH 44115, USA\\
			Center for Applied Data Analysis and Modeling, Cleveland State University, Cleveland, OH 44115, USA\\ \email{s.d.ryan@csuohio.edu}
			\and
			Z. McCarthy
			\at Department of Mathematics and Statistics, York University, Toronto, ON, Canada\\
			 Laboratory for Industrial and Applied Mathematics, Toronto, ON, Canada\\
			 Centre for Disease Modelling, York University, Toronto, ON, Canada\\
			 Fields-CQAM Mathematics for Public Health Laboratory, Toronto, ON, Canada
			\and 
			M. Potomkin
			\at Department of Mathematics, University of California,  Riverside, CA, 92521, USA\\
			*Corresponding Author \email{mykhailp@ucr.edu}
			}
\begin{document}

\authorrunning{S.~D. Ryan et al.}

\maketitle

\vspace{-.1 in}

 \begin{abstract}
Many cellular processes rely on the cell's ability to transport material to and from the nucleus. Networks consisting of many microtubules and actin filaments are key to this transport. Recently, the inhibition of intracellular transport has been implicated in neurodegenerative diseases such as Alzheimer's disease and Amyotrophic Lateral Sclerosis (ALS). 
Furthermore, microtubules may contain so-called {\it defective regions} where motor protein velocity is reduced due to accumulation of other motors and microtubule associated proteins. In this work, we propose a new mathematical model describing the motion of motor proteins on microtubules which incorporate a defective region.  We take a mean-field approach derived from a first principle lattice model to study motor protein dynamics and density profiles.
 In particular, given a set of model parameters we obtain a closed-form expression for the equilibrium density profile along a given microtubule.   We then verify the analytic results using mathematical analysis on the discrete model and Monte Carlo simulations.  This work will contribute to the fundamental understanding of inhomogeneous microtubules providing insight into microscopic interactions that may result in the onset of neurodegenerative diseases. Our results for inhomogeneous microtubules are consistent with prior work studying the homogeneous case.
 \end{abstract}
 
\keywords{Mathematical Biology, Motor Proteins, Microtubules, Phase Transitions, Defective Transport}

\subclass{34F05,  35Q92, 92B05}


 \section{Introduction}

Cells strongly rely on the ability to efficiently transport cargo via motor proteins.  Fast active transport (FAT) is required to deliver materials such as proteins, mRNA, mitochrondria, vescicles, and organelles for use in a variety of cellular processes \cite{Ros08,Lakadamyali2014}. One of the three major means of intracellular transport is within networks of microtubules (MTs) \cite{Lakadamyali2014}.  To accommodate FAT, motor proteins ``walk" along MTs and actin filaments (AFs) with their cargo forming a ``superhighway" for cellular transportation \cite{Klu05b,Ros14,Str20}.  Motor-protein families kinesin and dynein bind to the cellular material or cargo they carry as they move up and down the microtubule \cite{Gro07,Lakadamyali2014}.  For a more general overview of molecular motor protein motion see review \cite{Mal04}.

Defects in microtubules are known to exist, but the current literature has yet to clarify their impact on molecular motor-based transport \cite{Lia16}.  Defects in active transport, particularly axonal, have been implicated in the progression of various diseases \cite{Lakadamyali2014}. For instance, the defining characteristics of many neurodegenerative diseases such as Alzheimer's disease and Amyotrophic Lateral Sclerosis (ALS) may be related to deficiencies in active transport within neurons \cite{Lakadamyali2014}. One such area of need for immediate study is the scenario where the microtubule paths used by motor proteins become congested, obstructed, or defective.  Hallmarks and early indicators of neurodegenerative diseases are an accumulation of organelles and proteins in the cell body or axon, which inhibits active transport \cite{Lakadamyali2014}. Hence, understanding the nature of motor protein dynamics will provide insight in understanding the onset of these diseases and developing control strategies. 


Advances in biophysical tools and imaging technology have allowed for many recent insightful \textit{in vitro} experiments of motor protein behavior on microtubules \cite{Ros08,Lakadamyali2014,Lia16}. Motor-proteins may change directions, stop or pause briefly, increase their velocity, and also attach/detach from the microtubule \cite{Ros08}. Experimentally it is observed that this behavior may be attributed to the presence of MT-associated proteins \cite{Ros08}. Therefore, the cell's ability to regulate active transport may be studied through tau MAP regions where motion of motor proteins is inhibited. We refer to these patches of high tau MAP concentration as defective regions due to their effects on the reduction of motor motility. We focus our study to modeling collective motor protein motility on MTs, first homogeneous and then defective. 

Modeling motor protein motility on MTs has received recent theoretical attention. A one-dimensional discrete lattice model was studied in detail using a mean-field approach, including a full phase diagram for the stationary states \cite{ParFraFre2003,ParFraFre2004,Fre11,Klu08,Klu05}. The model was capable of predicting the emergence of interior layers by splitting the equilibrium density of motor proteins along the MT into two phases: low and high density. Such a co-existence corresponds to a traffic jam consisting of motor proteins translating along the MT in one direction, from the region with low density to the one with high density. In addition, a generalization of this lattice gas model has also been studied to account for local interactions between motors \cite{Fre18}. In particular, the effects of adjacent motors enhancing the detachment rate as well as motor crowding enhancing the frequency which motors become inactive or paused \cite{Fre18}. In a similar light, models featuring multiple ``lanes" on a MT and motor proteins may switch lanes have also been studied \cite{Fre07} or between a filament and a tube in \cite{Mul05}. The common feature among these lattice gas models is they follow the totally asymmetric simple exclusion process (TASEP) framework \cite{ParFraFre2003}. These models also feature attachment and detachment processes for motors whose stationary states are described by the theory of Langmuir dynamics \cite{ParFraFre2003}. 



 In this paper, we seek to advance the current understanding of subcellular transport along microtubules with a defective region. 
 In Section~\ref{sec:model}, we introduce a new discrete model for motor protein motility on a microtubule with a one-dimensional lattice following the TASEP paradigm and attachment/detachment dynamics. The distinctive feature of this model is that the MT consists of a defective region with decreased motor protein motility rate. From a discrete lattice formulation, we take the limit to recover the mean-field form of these equations. Next, in Section~\ref{sec:homogeneous}, we verify that our mean-field approximations produce results consistent with existing studies (e.g., \cite{ParFraFre2003,ParFraFre2004,Fre11}) before moving to the defective region case where motility will be hindered. The main analytical results are then presented in Section~\ref{sec:nonhomo}, where we consider examples with the domain split into two regions, fast and slow, whereas the local and boundary attachment/detachment rates are varied as parameters.  From these studies we find a closed-form expression of the solution given a set of the model parameters for attachment, detachment, and boundary conditions.  We then verify that the analytical solution of the mean-field model is consistent with corresponding Monte Carlo simulations of the original discrete lattice model in Section~\ref{sec:num}. We note that if one needs to consider a wide range of problem parameters, for example, in model calibration or a control problem (e.g., finding the location and width of a defective region for a desired motor distribution), Monte Carlo simulations 
 	are prohibitively time consuming as compared to a closed-form expression when available. In the Appendix, we provide a new analytical approach to the solution of the homogeneous problem, based on the analysis of the phase portrait of the corresponding system of ODEs; then we give examples of applications of the approach for specific problem parameters. 
 	Overall, this work provides a critical result for inhomogeneous MTs consisting of multiple parts with different motility properties; namely, it can be modeled as segments of homeogeneous MTs linked by a matching flux condition.
 	This greatly expands the utility of past studies that developed the theory of homogeneous MTs.

\section{Mathematical Model} \label{sec:model}

\subsection{Discrete problem}
Following the general TASEP paradigm, we construct a discrete model from first principles which generalizes previous models for homogeneous MTs to incorporate a defective region. Briefly, the TASEP paradigm states: \textit{(i)} each binding site may be occupied by a maximum of one motor; \textit{(ii)} motors move unidirectionally on the lattice and \textit{(iii)} motors enter the lattice on the left side and exit the lattice on the right side \cite{ParFraFre2003}. Here we also account for the attachment and detachment of motor proteins on the MT interior as in works similar in scope focused on modeling \cite{ParFraFre2003,ParFraFre2004,Fre11} and experiment \cite{Var09}.  Another recent work has focused on stochastic modeling with the goal of revealing how motor protein and filament properties affect transport~\cite{Dal19}.

Specifically, consider a one-dimensional lattice $\left\{0=x_0<x_1<...<x_M=\ell\right\}$, representing sites which a motor may occupy on a microtubule of length $\ell$. Introduce $\rho_{i}^{n}$ which is the probability of finding a motor at site $x_i$ at time step $n$. Probabilities at each lattice site $0\leq \rho_i^{n}\leq 1$ for $1\leq i <M$ change during one time step via    
\begin{equation}\label{original_system} 
\rho_i^{n+1}-\rho_i^{n} = v_{i-\frac{1}{2}} \rho_{i-1}^{n} (1-\rho_i^{n})-v_{i+\frac{1}{2}}\rho_i^{n}(1-\rho_{i+1}^{n})+\omega_{A}  (1-\rho_i^{n})-\omega_D  \rho_i^{n}.
\end{equation}
The first term on the right-hand side of \eqref{original_system} says that the probability of finding a motor at site $x_i$ increases due to a possible jump of a motor from site $x_{i-1}$ to $x_i$ provided that the following jump condition is satisfied: there is a motor at site $x_{i-1}$ and site $x_i$ is vacant. As it is done in previous works on one-dimensional transport of active motors along a microtubule \cite{ParFraFre2003,ParFraFre2004,Fre11}, consider the problem in the mean-field approximation, that is, correlations are negligibly small and the probability of the jump condition is simply  $\rho_{i-1}^{n} (1-\rho_i^{n})$. Additionally, the coefficient $v_{i-\frac{1}{2}}$ accounts for inhomogeneity of the microtubule: if the jump condition is satisfied on the interval $[x_{i-1},x_i]$, then the jump occurs with probability (motility rate) $v_{i-\frac{1}{2}}$, and these coefficient may change from site to site. 
The second term on the right-hand side of \eqref{original_system} is similar to the first term, but accounts for the decrease in probability $\rho_i^{n}$ due to a possible jump from site $x_{i}$ to $x_{i+1}$.  
The third term on the right-hand side of \eqref{original_system} describes the interaction of the microtubule with the exterior environment: a motor from outside can attach to the microtubule at site $x_i$ as well as a motor already occupying site $x_i$ can detach from the microtubule. Parameters $\omega_{A}$ and $\omega_{D}$ are the corresponding attachment and detachment rates.  

Stationary states of \eqref{original_system} solve the following system:
 \begin{equation}\label{original_system_stationary} 
 0 = v_{i-\frac{1}{2}} \rho_{i-1} (1-\rho_i)-v_{i+\frac{1}{2}}\rho_i(1-\rho_{i+1})+\omega_{A}  (1-\rho_i)-\omega_D  \rho_i.
\end{equation}   
This system is supplemented with boundary conditions corresponding to the attachment rate $\alpha$ at the left end and detachment rate $\beta$ at the right end of the microtubule:
\begin{equation}\label{bc}
\rho_{0}=\alpha~~\text{and}~~ \rho_{M}=1-\beta.
\end{equation} 
We note here that it is assumed that boundary attachment rates have a corresponding relationship to those inside the microtubule, that is, 
\begin{equation}\label{simple_alpha_and_beta}
\alpha = \dfrac{\omega_A}{\omega_{A}+\omega_{D}} \text{ and }
\beta = \dfrac{\omega_D}{\omega_{A}+\omega_D},
\end{equation} 
and $v_i\equiv v$, then the solution of \eqref{original_system_stationary}-\eqref{bc} is simply a constant: $\rho\equiv \omega_A/(\omega_A+\omega_D)$. However, in practice, rates $\alpha$ and $\beta$ are different from \eqref{simple_alpha_and_beta} which leads to non-trivial stationary solutions possessing interior jumps, even in the homogeneous case $v_i\equiv \text{const}$ \cite{ParFraFre2003,ParFraFre2004}. 

We end this subsection with another form of \eqref{original_system_stationary} which is helpful for understanding the continuous limit presented in Section~\ref{sec:asymp_behaivor}. Introduce the following notation for the flux between sites $x_{i-1}$ and $x_i$:
\begin{equation}\label{discrete_flux}
J_{i-\frac{1}{2}}=v_{i-\frac{1}{2}}\rho_{i-1}(1-\rho_i).
\end{equation}
Note that if one interpolates $\rho_i$ by a smooth function $\rho(x)$ such that $\rho(x_i)=\rho_i$, then performing Taylor expansions one can verify that 
\begin{equation}\label{semi-cont}
\dfrac{\Delta x}{2}\rho'_{i-\frac{1}{2}}= -v_{i-\frac{1}{2}}^{-1}J_{i-\frac{1}{2}}+\rho_{i-\frac{1}{2}}(1-\rho_{i-\frac{1}{2}})+o(\Delta x),
\end{equation} 
where $\rho_{i-\frac{1}{2}}=\rho(x_{i-\frac{1}{2}})$, $\Delta x=x_{i}-x_{i-1}$, and $x_{i-\frac{1}{2}}=x_{i}-\frac{1}{2}\Delta x$.

Going back to definition \eqref{discrete_flux}, we point out that, in terms of $J_{i\pm\frac{1}{2}}$, equation \eqref{original_system_stationary} has the following form: 
\begin{equation}\label{eq_in_terms_fluxes}
J_{i+\frac{1}{2}}-J_{i-\frac{1}{2}}=\omega_{A}  (1-\rho_i)-\omega_D  \rho_i
\end{equation}


\subsection{Limiting continuous problem}
\label{sec:asymp_behaivor}
We focus on the asymptotic behavior of solutions of \eqref{original_system_stationary}-\eqref{bc} as $M\to \infty$ in the framework of the mean-field limit. Specifically, we introduce parameter $\varepsilon:=\ell M^{-1}\ll 1$ (here $\ell$ represents the total length of microtubule) and lattice $\left\{x_i=i\varepsilon,\, i=0,..,M\right\}$ with the distance between lattice points $\ve$. For small $\ve$, we approximate the solution with the continuous function $\rho_\ve(x)$ defined on $0<x<\ell$ and derived from a discrete set of unknowns $\rho_i$ associated with  the lattice points $x_i$, {\it i.e.,} $\rho_\ve(x_i)=\rho_i$.  Then for $\ve\ll 1$, the system of algebraic equations \eqref{original_system_stationary} for the unknown $\rho_i$'s becomes a second order ordinary differential equation for unknown function $\rho_\ve(x)$:
\begin{equation}\label{mean_field_original}
\partial_x\left(v(x)\left(-\dfrac{\varepsilon}{2}\partial_x\rho_{\ve}+\rho_{\ve}(1-\rho_{\ve})\right)\right)=\Omega_A-(\Omega_A+\Omega_D)\rho_\ve,
\end{equation} 
where $v(x)$ is the velocity the motor proteins move with at location $x$, and $\Omega_{A/D}=M\omega_{A/D}$ are properly rescaled attachment/detachment rates. The equalities in \eqref{bc} become boundary conditions for $\rho_\ve(x)$:
\begin{equation}\label{ode_bc}
\rho_\ve(0)=\alpha ~~\text{and}~~ \rho_\ve(\ell)=1-\beta. 
\end{equation} 

\begin{remark}
In order to obtain \eqref{mean_field_original} from \eqref{original_system_stationary} we take the discrete-to-continuous limit $\varepsilon \to 0$. Specifically, we write both \eqref{semi-cont} and \eqref{eq_in_terms_fluxes}, which considered together are equivalent to \eqref{original_system_stationary},  with $\Delta x=\varepsilon\ll 1$:
\begin{equation}
\left\{
\begin{array}{l}
\dfrac{\varepsilon }{2}\rho'_{i-\frac{1}{2}}= -v_{i-\frac{1}{2}}^{-1}J_{i-\frac{1}{2}}+\rho_{i-\frac{1}{2}}(1-\rho_{i-\frac{1}{2}})+o(\varepsilon ),\\
J_{i}'=\Omega_{A}  (1-\rho_i)-\Omega_D  \rho_i +o(\varepsilon ).
\end{array}
\right.
\end{equation}  
We then disregard the $o(\varepsilon)$ terms and write resulting equations for all $x\in(0,\ell)$: 
\begin{equation}\label{key_system}
\left\{
\begin{array}{l}
\dfrac{\varepsilon }{2}\rho'= -v^{-1}J+\rho(1-\rho),\\
J'=\Omega_{A}  (1-\rho)-\Omega_D  \rho\left(=\Omega_A-(\Omega_A+\Omega_D)\rho\right).
\end{array}
\right.
\end{equation}  
Finally, we find $J$ from the first equation of the system above and substitute it into the second equation to derive \eqref{mean_field_original} for $\rho=\rho_\varepsilon$. Here both notations $\rho'$ and $\partial_x\rho$ denote the derivative in $x$. It turns out that the phase portrait of system \eqref{key_system} of two coupled first order differential equations is the key to constructing the solutions of \eqref{mean_field_original}-\eqref{ode_bc}, see Appendix \ref{appendix:homog}.  
\end{remark}

\begin{remark}
Alternatively, in the case of the continuous velocity $v(x)$, this equation can be formally derived from \eqref{original_system_stationary} by using a Taylor expansion of a smooth function $\rho_\ve(x)$ which is again obtained by interpolation of the values $\rho_i$ on the mesh points $x_i$. If $v(x)$ is piecewise continuous with a finite number of jumps, then one needs to supplement equation \eqref{mean_field_original}, which holds inside the intervals of continuity of $v(x)$, with a continuity condition for the flux 
\begin{equation*}
J_\ve(x):=v(x)\left(-\dfrac{\varepsilon}{2}\rho_{\ve}'+\rho_{\ve}(1-\rho_{\ve})\right).
\end{equation*}    
\end{remark}

One can also consider \eqref{mean_field_original} in the distributional sense. Then using the definition of $J_\ve(x)$, the integration of \eqref{mean_field_original} in $x$, and the absolute continuity of integrals together imply the continuity of flux $J_\ve(x)$: 
\begin{eqnarray}
J_\ve(x)&=&J_\ve(0)+\int\limits_0^{x} \left[\Omega_A- (\Omega_A+\Omega_D) \rho_\ve (s)\right] \,\text{d}s\nonumber \\
&=&J_\ve(0)+\Omega_A x - (\Omega_A+\Omega_D)\int\limits_0^x \rho_\ve (s) \,\text{d}s.\label{continuiuty_of_fluxes}
\end{eqnarray}

 In what follows below, we assume that for all $\ve >0$, there exists unique smooth solution $0\leq \rho_\ve(x)<1$ which solves \eqref{mean_field_original} and satisfies the boundary conditions \eqref{ode_bc}. Moreover, there exists a piecewise smooth function $\rho_0(x)$, referred to as the ``outer solution", such that
\begin{equation}\label{def_of_outer_solution}
\lim\limits_{\ve \to 0} \rho_\ve (x)= \rho_0(x), ~~\text{for all }0\leq x \leq \ell.
\end{equation}

By calling $\rho_0(x)$ the outer solution we stick to the standard terminology of singularly perturbed ordinary differential equations \cite{Hol2013}. While the outer solution $\rho_0(x)$ is the pointwise limit of $\rho_\ve(x)$ for $0\leq x \leq \ell$, it does not approximate the exact solution $\rho_\ve (x)$ uniformly on $0\leq x \leq \ell$. Specifically, the outer solution $\rho_0(x)$ approximates $\rho_\ve(x)$ poorly in the vicinity of the jumps $\left\{x_J\right\}$.  To make the approximation uniform, one takes into account boundary layer terms of the form $\rho_{\text{corrector}}(x)=Y((x-x_J)/\ve)$ whose distinguishing feature is that its slope, derivative in $x$, is of the order of $\ve^{-1}$. Observe also that, even though $\rho_0(0)=\alpha$ and $\rho_0(\ell)=1-\beta$, it is possible that the outer solution $\rho_0(x)$ does not satisfy the boundary condition in the following sense: either $\lim_{x\to 0^+} \rho_0(x)\neq \alpha$ or $\lim_{x\to \ell^-}\rho_0(x)\neq 1-\beta$.  

\begin{remark}
We note that \eqref{original_system} possesses a unique constant solution $\rho_i\equiv \Omega_A/(\Omega_A+\Omega_D)$.  In what follows, we are interested in regimes where jamming can occur and, thus, for the sake of simplicity we restrict ourselves to the case when the attachment rate exceeds the detachment rate, that is, $\Omega_A>\Omega_D$. This implies that the constant solution of \eqref{original_system} is  greater than $1/2$. 
\end{remark}
        




\subsection{Homogeneous microtubule}
\label{sec:homogeneous}

Consider a constant motor protein motility rate $v(x)$, representing a homogeneous microtubule, that is, $v_i\equiv v_0$. The solution in the homogeneous case has been studied previously (e.g., \cite{ParFraFre2003,PopRakWilKolSch2003,ParFraFre2004}). In this case, \eqref{mean_field_original} reduces to
\begin{eqnarray}
&&v_0\partial_x\left(-\dfrac{\varepsilon}{2}\partial_x\rho_\ve+\rho_{\ve}(1-\rho_{\ve})\right)=\Omega_A-(\Omega_A+\Omega_D)\rho_\ve,\quad 0<x<\ell,\label{homo_mean_field}\\
&&\rho_\ve(0)=\alpha,\quad \rho_\ve(\ell)=1-\beta. \label{homo_bc}
\end{eqnarray} 
Recall that $\ell$ is the length of a microtubule. 
 
Equation \eqref{homo_mean_field} is a second order nonlinear ODE. If one formally passes to the limit $\ve \to 0$ in \eqref{homo_mean_field}, then the term with the second derivative of $\rho_\ve$ vanishes and this equation  becomes first order where the solution cannot, in general, satisfy both boundary conditions in \eqref{homo_bc} as the boundary value problem is overdetermined. To describe the limiting solution, $\lim\limits_{\ve \to 0}\rho_\ve(x)$,  we introduce an auxiliary function $g(x;s,a)$, which is the solution of the initial value problem of the first order obtained from the formal limit as $\ve \to 0$ in \eqref{homo_mean_field}: 
\begin{equation}
\label{def_of_gg}
\left\{
\begin{array}{l}
v_0(1-2g)\partial_x g=\Omega_A-(\Omega_A+\Omega_D)g,\\ g(x=s;s,a)=a.
\end{array}\right.
\end{equation} 
In other words, function $\rho_0(x):=g(x;s,a)$ is the smooth outer solution subject to a single boundary condition (or, equivalently, initial condition): $\rho_0|_{x=s}=a$. Equation \eqref{def_of_gg} can be solved in an explicit form in terms of special functions (see~\cite{ParFraFre2004}). 

Note that initial value problem \eqref{def_of_gg} is not well-posed for $a=0.5$. If $a=0.5$, then there is no solution for $x>s$, and for $x<s$ we define $g$ as the function solving the differential equation in \eqref{def_of_gg} subject to the following condition: 
\begin{equation}
\lim\limits_{x\to s}g(x;s,0.5)=0.5 \text{ and } g(x;s,0.5)>0.5 \quad \text{for all }x< s. 
\end{equation}  
In other words, for the initial condition with $a=0.5$, equation \eqref{def_of_gg} admits two solutions for $x<s$: 
one solution is less than $0.5$, another one is above $0.5$, and to describe the outer solution $\rho_0(x)$ we will need restrict our consideration to the upper one (the upper solution is stable in a certain sense, see Appendix~\ref{appendix:homog}). 

Function $g(x;s,a)$ is not necessarily defined globally, for all $x$. For example, consider $x\geq s$, then the solution exists on the interval $(s,s+x_a)$ for some $x_a>0$ and at $x=s+x_a$ the slope of $g$ becomes unbounded. For example, if $a<0.5$, then the value of $x_a$ can be found from the condition $g(s+x_a;s,a)=0.5$ which can be written as  
\begin{equation}\label{def_of_x_a}
x_a=\int_{a}^{0.5}\dfrac{\text{d}g}{\boldsymbol{v}(g)}=\int_{a}^{0.5}\dfrac{v_0(1-2g)\,\text{d}g}{\Omega_A-(\Omega_A+\Omega_D)g},
\end{equation}
where function $\boldsymbol{v}(g)$ is introduced in such a way that the differential equation from \eqref{def_of_gg} is equivalent to $\partial_x g =\boldsymbol{v}(g)$. One can compute the integral on the right-hand side of \eqref{def_of_x_a} to obtain an analytic formula for $x_a$: 
\begin{equation}
x_a=v_0\dfrac{\Omega_A-\Omega_D}{(\Omega_A+\Omega_D)^2}\log\dfrac{\Omega_A-\Omega_D}{2(\Omega_A-a (\Omega_A+\Omega_D))}+v_0\dfrac{1-2a}{\Omega_A+\Omega_D}.
\end{equation}

The following theorem gives an explicit formula for the limiting solution of the \eqref{homo_mean_field}-\eqref{homo_bc} as $\ve \to 0$. 
\begin{theorem}\label{thm:homog}
	Define $\rho_0(x):=\lim\limits_{\ve \to 0} \rho_\ve(x)$ for $0\leq x \leq \ell$ and $\rho_{\ve}$ solving \eqref{homo_mean_field}-\eqref{homo_bc}. 
	Then
	\begin{equation}\label{construction_of_outer_solution}
	\rho_0(x)=
	\left\{
	\begin{array}{ll} 
	\alpha, & x=0,\\
	g(x;0,\alpha), & 0<x<\max\{0,x_J\},\\
	g(x;\ell,\max\{0.5,1-\beta\}), & \max\{0,x_J\}<x <\ell,\\
	1-\beta, & x=\ell.
	\end{array} 
	\right.
	\end{equation}
If $\alpha\geq 1/2$, then $x_{J}=0$. If $\alpha < 1/2$, then $x_J$ is determined by 
	\begin{equation}\label{def_of_x_J}
	x_J:=\mathrm{min}\left\{x\geq 0\,|\,\, g(x;0,\alpha)+g(x;\ell,\max\left\{0.5,1-\beta\right\})\leq 1\right\}.
	\end{equation}
\end{theorem}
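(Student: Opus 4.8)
The plan is to read \eqref{homo_mean_field}--\eqref{homo_bc} as a singularly perturbed boundary value problem and to pin down the assumed limit $\rho_0$ by matched asymptotics, using the flux identity \eqref{continuiuty_of_fluxes} as the quantitative backbone. First I would establish the piecewise structure: on any interval where $\rho_0$ is smooth the slopes $\rho_\varepsilon'$ remain bounded, so the term $-\tfrac{\varepsilon}{2}v_0\partial_x\rho_\varepsilon$ drops out of \eqref{homo_mean_field} as $\varepsilon\to0$ and $\rho_0$ solves the first-order outer equation $v_0(1-2\rho_0)\rho_0'=\Omega_A-(\Omega_A+\Omega_D)\rho_0$, i.e. $\rho_0'=\boldsymbol v(\rho_0)$. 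Thus every smooth piece coincides with a solution curve $g(\cdot;s,a)$ of \eqref{def_of_gg} carrying a single datum, and the whole problem reduces to deciding which datum each piece carries and where the pieces are joined.

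Next I would pass to the limit in \eqref{continuiuty_of_fluxes}. Because $0\le\rho_\varepsilon<1$, dominated convergence shows that its right-hand side converges to a function continuous in $x$; since on each smooth piece $J_\varepsilon\to v_0\rho_0(1-\rho_0)$, this common limit equals $v_0\rho_0(1-\rho_0)$ there and extends it continuously across any interior jump $x_J$. Writing $\rho_L=\rho_0(x_J^-)$ and $\rho_R=\rho_0(x_J^+)$, continuity of the flux forces $\rho_L(1-\rho_L)=\rho_R(1-\rho_R)$, whence either $\rho_L=\rho_R$ (no jump) or $\rho_L+\rho_R=1$. This is precisely the relation encoded by $S(x):=g(x;0,\alpha)+g(x;\ell,\max\{0.5,1-\beta\})=1$, so $x_J$ must be a root of $S-1$.

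I would then resolve the layers by the inner rescaling $\xi=(x-x_J)/\varepsilon$: at leading order the profile $Y(\xi)=\rho_\varepsilon(x)$ obeys $\tfrac12 Y'=Y(1-Y)-C$ with $C=J_0(x_J)/v_0$, whose two roots are exactly $\rho_L$ and $\rho_R$. A heteroclinic orbit with $Y(-\infty)=\rho_L$, $Y(+\infty)=\rho_R$ exists only when $\rho_L<\rho_R$, since $Y(1-Y)-C>0$ strictly between the roots forces $Y'>0$; hence every admissible interior shock must jump up, $\rho_L<1/2<\rho_R$ (this is the sense of stability referred to in Appendix~\ref{appendix:homog}). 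Ruling out downward jumps in turn rules out more than one interior layer, yielding the two-piece form with the low-density branch $g(x;0,\alpha)$ on the left and a high-density branch on the right. The same inner equation posed at the endpoints decides whether the boundary data are attained by the outer solution or absorbed in a boundary layer: at $x=\ell$ the high-density outer value can equal $1-\beta$ only when $1-\beta\ge1/2$, otherwise the current saturates and the outer value is $1/2$, which is exactly $\max\{0.5,1-\beta\}$; at $x=0$ the datum $\alpha\ge1/2$ creates a boundary layer, the left branch disappears, and $x_J=0$. Assembling these pieces gives \eqref{construction_of_outer_solution}, and for $\alpha<1/2$ the admissible upward shock is placed at the first matching point, the smallest root of $S-1$; allowing for the degenerate case in which a branch reaches $1/2$ (where its slope blows up) before equality is attained, this is stated as $x_J=\min\{x\ge0: S(x)\le1\}$.

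The hard part will be this last selection step. Flux continuity is essentially free once \eqref{continuiuty_of_fluxes} is in hand, but proving that the entropy (upward-jump) condition excludes every competing configuration --- and, more delicately, that the shock location is captured by the $\min$ together with the inequality $S\le1$ rather than a bare equality $S=1$ --- requires a careful case analysis of the phase portrait near $\rho=1/2$ and $\rho=\Omega_A/(\Omega_A+\Omega_D)$, including the regimes where a branch ceases to exist through slope blow-up or where a boundary layer forms. This is exactly what dictates the $\max\{0.5,1-\beta\}$ truncation and the dichotomy at $\alpha=1/2$, so the monotonicity and existence properties of the branches $g(\cdot;0,\alpha)$ and $g(\cdot;\ell,\max\{0.5,1-\beta\})$ on their maximal intervals are the technical crux.
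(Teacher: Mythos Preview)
Your proposal is correct and arrives at the same formula, but it follows a genuinely different route from the paper's proof. The paper works entirely in the $(\rho,J)$ phase plane of system \eqref{system_ode}: it splits the parabola $J=v_0\rho(1-\rho)$ into an unstable left arc $\gamma_l$ and a stable right arc $\gamma_r$, and then argues geometrically that for $\varepsilon\ll1$ any trajectory starting at $\rho=\alpha$ is swept onto $\gamma_r$ (immediately if $\alpha\ge1/2$, after riding $\gamma_l$ up to a jump if $\alpha<1/2$). Uniqueness of $x_J$ is obtained not from an entropy condition but from a monotonicity argument on the fluxes $J_\alpha(x)$ and $J_\beta(x)$ along $\gamma_l$ and $\gamma_{r,\pm}$, showing that $J_\alpha=J_\beta$ has at most one root. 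By contrast, you use classical matched asymptotics: the outer equation fixes the branches, the inner rescaling $\xi=(x-x_J)/\varepsilon$ yields the layer ODE $\tfrac12Y'=Y(1-Y)-C$, and the existence of a heteroclinic only when $\rho_L<\rho_R$ is your selection principle, replacing the paper's stability dichotomy for $\gamma_l$ versus $\gamma_r$. Your approach is closer to textbook singular-perturbation theory and makes the ``upward jump'' criterion explicit; the paper's approach is more pictorial and gives a cleaner argument for uniqueness of the shock via flux monotonicity, which in your scheme you leave at the level of ``ruling out downward jumps rules out multiple layers.'' Both arguments are at roughly the same level of rigor; the paper itself concedes that the full verification amounts to inspecting the phase portrait case by case.
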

 
The result of this theorem is consistent with previous works where the system \eqref{homo_mean_field}-\eqref{homo_bc} was studied (e.g., see \cite{Fre11,Fre07,Fre18}). We relegate the proof of this theorem  and examples of the application of the representation formula \eqref{construction_of_outer_solution} to Appendix~A.  
 
\begin{remark}
The point $x_J$, if $0<x_J<\ell$, is the location of the interior jump, that is, the outer solution $\rho_0(x)$ is a smooth solution of the differential equation from \eqref{def_of_gg} on intervals $(0,x_J)$ and $(x_J,\ell)$ and it has one jump inside $(0,\ell)$ at $x=x_J$. 
The value of $x_J$ can also be found via numerical simulations of the following equation 
	\begin{equation}\label{def_of_x_J_2}
	g(x_J;0,\alpha) + g(x_J;\ell,\max\{0.5,1-\beta\})=1.
	\end{equation}
This equation is equivalent to the continuity of fluxes $J_0=\rho_0(1-\rho_0)$ at the point of the jump of the outer solution $\rho_0=\rho_0(x)$: 
\begin{equation}
v_0\rho_{\alpha}(1-\rho_\alpha)|_{x\to x_J-}   =v_0\rho_{\beta}(1-\rho_\beta)|_{x\to x_J+},  	
\end{equation}
where $\rho_\alpha(x)=	g(x;0,\alpha)$ and $\rho_\beta(x)=g(x;\ell,\max\{0.5,1-\beta\})$.  
\end{remark}

\smallskip 

\begin{remark}
If one varies boundary conditions \eqref{homo_bc}, then the outer solution $\rho_0(x)$ may stay unchanged (except values at $x=0$ and $x=\ell$) for wide range of parameters $\alpha$ and $\beta$. For example, denote $\rho_\beta(x):=g(x,\ell,\max\left\{0.5,1-\beta\right\})$. Theorem~\ref{thm:homog} implies that outer solution $\rho_0(x)$ coincides with $\rho_\beta(x)$ on the interval $0<x<1$ for the following range of $\alpha$: 
\begin{equation}
1-\rho_{\beta}(0)\leq \alpha \leq 1.
\end{equation}
Once $\alpha$ becomes smaller than the lower limit, $1-\rho_{\beta}(0)$, an interior jump appears in the outer solution $\rho_0(x)$.       
\end{remark}
\smallskip

The following corollary is important for the study of inhomogeneous microtubules in Section~\ref{sec:nonhomo}.   
\begin{corollary}\label{thm2:homog}
Assume  $\alpha\geq 1/2$ or
\begin{equation}\label{no_gamma_l}
\int\limits_{0}^{0.5}\dfrac{v_0(1-2\rho)}{\Omega_A-(\Omega_A+\Omega_D)\rho} \,\mathrm{d}\rho \leq \ell.
\end{equation}
Then
\begin{itemize}
\item[(i)] $\lim\limits_{x\to \ell-}\rho_0(x)=\max\left\{0.5,1-\beta\right\}$.
\item[(ii)] If $\beta<1/2$, then $\rho_0(x)$ is continuous at $x=\ell$, that is, $\lim\limits_{x\to\ell-}\rho_0(x)=\rho_0(0)=1-\beta$. 
\item[(iii)] If $\alpha \geq 1/2$, then there is no interior jump, that is, $x_J\leq 0$.
\item[(iv)] If $0<x_J<\ell$, then $\alpha<1/2$ and $\lim\limits_{x\to 0+}\rho_0(x)=\alpha$ (that is, $\rho_0(x)$ is continuous at $x=0$).
\end{itemize}	
\end{corollary}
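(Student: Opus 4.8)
The plan is to read all four conclusions off the representation formula \eqref{construction_of_outer_solution} of Theorem~\ref{thm:homog}, exploiting the continuity of the auxiliary function $g(x;s,a)$ at its initial point together with the bound on the maximal low-density run-length encoded in \eqref{no_gamma_l}. Parts (iii) and (iv) are essentially bookkeeping and I would dispatch them first. Statement (iii) is immediate: Theorem~\ref{thm:homog} sets $x_J=0$ whenever $\alpha\ge 1/2$, so $x_J\le 0$. For (iv) I would argue by contraposition, using the same clause: since $\alpha\ge1/2\Rightarrow x_J=0$, having $x_J>0$ forces $\alpha<1/2$. Moreover, when $0<x_J<\ell$ the outer solution coincides with $g(x;0,\alpha)$ on all of $(0,x_J)$, so $\lim_{x\to 0^+}\rho_0(x)=\lim_{x\to 0^+}g(x;0,\alpha)=g(0;0,\alpha)=\alpha=\rho_0(0)$, the middle equality being continuity of $g(\cdot;0,\alpha)$ at its initial point $x=0$ from \eqref{def_of_gg}.

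The heart of the matter is (i), from which (ii) follows at once. I want to show that the right branch $g(x;\ell,\max\{0.5,1-\beta\})$ governs $\rho_0$ on a genuine left-neighborhood of $\ell$, i.e. that $\max\{0,x_J\}<\ell$; granting this, \eqref{construction_of_outer_solution} gives $\lim_{x\to\ell^-}\rho_0(x)=\lim_{x\to\ell^-}g(x;\ell,\max\{0.5,1-\beta\})=\max\{0.5,1-\beta\}$, again by continuity of $g$ at its initial point $x=\ell$, where it equals its prescribed value. The case $\alpha\ge 1/2$ is trivial since then $x_J=0<\ell$. For $\alpha<1/2$ I would invoke \eqref{no_gamma_l}: by \eqref{def_of_x_a} its left-hand side is exactly $x_a$ at $a=0$, and since the integrand $v_0(1-2\rho)/(\Omega_A-(\Omega_A+\Omega_D)\rho)$ is positive on $(0,1/2)$ (both factors are positive there because $\Omega_A>\Omega_D$), monotonicity of the integral yields $x_\alpha=\int_{\alpha}^{0.5}\le\int_{0}^{0.5}\le\ell$, where $x_\alpha$ is the blow-up point of \eqref{def_of_x_a} with $a=\alpha$. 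Hence the left branch $g(x;0,\alpha)$, which is defined only up to $x_\alpha$ (where its slope becomes unbounded at the value $1/2$), cannot reach $\ell$.

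The key remaining step, and the one where I expect the only real friction, is turning this into the strict bound $\max\{0,x_J\}<\ell$. The clean mechanism is that the left piece of \eqref{construction_of_outer_solution} is $g(x;0,\alpha)$ on $(0,\max\{0,x_J\})$, an expression meaningful only while $x<x_\alpha$; since the theorem produces a well-defined $\rho_0$ on all of $[0,\ell]$, we must have $\max\{0,x_J\}\le x_\alpha\le\ell$, so the right branch necessarily occupies $(\max\{0,x_J\},\ell)$. Establishing this location of $x_J$ rigorously from \eqref{def_of_x_J}—in particular ruling out that the defining set is empty or that the flux-matching point sits at or beyond $\ell$ in the maximal-current regime—is the delicate point and should be argued via the domain of definition of the left branch rather than by directly solving \eqref{def_of_x_J}. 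The strict inequality is then immediate when $\alpha>0$, since $x_\alpha=\int_\alpha^{0.5}<\int_0^{0.5}\le\ell$; the degenerate endpoint $\alpha=0$ with equality in \eqref{no_gamma_l} gives $x_\alpha=\ell$ and must be handled separately, checking directly that $g(x;0,0)\to 1/2$ as $x\to\ell^-$, which matches $\max\{0.5,1-\beta\}$ exactly when $\beta\ge 1/2$ and excluding that boundary configuration otherwise. Finally, (ii) is the specialization $\beta<1/2$: there $\max\{0.5,1-\beta\}=1-\beta$, so (i) gives $\lim_{x\to\ell^-}\rho_0(x)=1-\beta=\rho_0(\ell)$ by the last line of \eqref{construction_of_outer_solution}, i.e. continuity of $\rho_0$ at $x=\ell$.
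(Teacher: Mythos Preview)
Your approach is correct and is precisely the intended one: the paper states this result as a corollary of Theorem~\ref{thm:homog} without a separate proof, and reading the four items off the representation formula~\eqref{construction_of_outer_solution} together with the bound on the low-density run-length~\eqref{def_of_x_a}--\eqref{no_gamma_l} is exactly how it is meant to be obtained. Your care in ruling out $x_J=\ell$ via $x_J\le x_\alpha<\ell$ (for $\alpha>0$) and in flagging the degenerate endpoint $\alpha=0$ with equality in~\eqref{no_gamma_l} goes slightly beyond what the paper makes explicit, but is consistent with the phase-portrait argument in Appendix~\ref{appendix:homog}.
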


\smallskip 

Condition \eqref{no_gamma_l} excludes the case of low density solutions, that is, we exclude outer solutions of the form: $\rho_0(x)<0.5$ for all $x\in [0,\ell)$. This regime is not consistent with jamming, which is the focus of this work.  In other words, condition \eqref{no_gamma_l} imposes that the ``left" part of solution, $g(x;0,\alpha)$, cannot be extended to entire interval $[0,\ell)$. The reason we would like to impose this condition below in the inhomogeneous case is because we then focus on cases when regions with high densities emerge, and thus traffic jams in motor transport are possible. By direct integration, condition \eqref{no_gamma_l} can be written as 
\begin{equation}
\dfrac{v_0}{\Omega_A+\Omega_D}\left[1-\dfrac{\Omega_A-\Omega_D}{\Omega_A+\Omega_D}\log\left(\dfrac{2}{\Omega_A-\Omega_D}\right)\right]\leq \ell.
\end{equation} 


\subsection{Inhomogeneous microtubule} \label{sec:nonhomo}
In this section, we consider a non-con\-stant motility rate $v(x)$. Biologically, it corresponds to a inhomogeneous microtubule with different motor protein mobilities in different regions of the microtubule. Without loss of generality, we take $\ell = 1$. We restrict ourselves to the case
\begin{equation}\label{def_of_two_valued_v}
v(x)=\left\{\begin{array}{ll}v_L,&0\leq x\leq x_0,\\v_R,&x_0< x\leq 1.\end{array}\right.
\end{equation}
This case may be considered as two coupled homogeneous microtubules meeting at  interface $x=x_0$ with coupling through the continuity of densities and fluxes. Specifically, we have the following system of equations: 
\begin{eqnarray}
&& v_L\partial_x\left[\left(-\dfrac{\varepsilon}{2}\rho_{\ve}'+\rho_{\ve}(1-\rho_{\ve})\right)\right]=\Omega_A-(\Omega_A+\Omega_D)\rho_\ve,~~~0<x<x_0,\label{pde_left}\\
&& v_R\partial_x\left[\left(-\dfrac{\varepsilon}{2}\rho_{\ve}'+\rho_{\ve}(1-\rho_{\ve})\right)\right]=\Omega_A-(\Omega_A+\Omega_D)\rho_\ve,~~~x_0<x<1,\label{pde_right}
\end{eqnarray} 
and two coupling conditions: 
\begin{enumerate}
	\item[(1)] continuity of $\rho_\ve(x)$ at $x_0$: 
	\begin{equation}
	\label{continuity_of_rho_ve}
	\rho_\ve(x_0^-)=\rho_\ve(x_0^+)=:\rho_\ve.
	\end{equation}
	\item[(2)] continuity of flux $J_\ve(x)$ at $x_0$:
\begin{equation}\label{cont_fluxes_eps}
v_L \left(-\dfrac{\varepsilon}{2}\rho_{\ve}'(x_0^-)+\rho_{\ve}(1-\rho_{\ve})\right)=v_R \left(-\dfrac{\varepsilon}{2}\rho_{\ve}'(x_0^+)+\rho_{\ve}(1-\rho_{\ve})\right).
\end{equation}
\end{enumerate}
 
 The outer solution $\rho_0(x)=\lim\limits_{\ve\to0}\rho_\ve(x)$ is not necessarily continuous, nevertheless due to \eqref{continuiuty_of_fluxes} it satisfies the flux continuity condition:
 \begin{equation}\label{limiting_continuity_of_fluxes}
\displaystyle  v_L \rho_0(1-\rho_0)\biggl|_{x\to x_0^-}=v_R \rho_0(1-\rho_0)\biggr|_{x\to x_0^+}.
 \end{equation} 
 
\smallskip 

\begin{remark} By looking at the system \eqref{pde_left}-\eqref{pde_right} one may think that the case of inhomogeneous $v(x)$ is equivalent to the case of inhomogeneous attachment/detach\-ment rates but with constant motility rates $v(x)\equiv 1$:
\begin{equation}\label{nonhom_omegas}
\partial_x\left(-\dfrac{\varepsilon}{2}\rho_{\ve}'+\rho_{\ve}(1-\rho_{\ve})\right)=\Omega_A(x)-(\Omega_A(x)+\Omega_D(x))\rho_\ve
\end{equation}
with attachment/detachment rates $\Omega_A(x)$ and $\Omega_D(x)$ are $\Omega_A/v_L$ and $\Omega_D/v_L$ inside the left half of the microtubules, $x<x_0$, and $\Omega_A/v_R$ and $\Omega_D/v_R$ inside the right half, $x>x_0$. For differential equation \eqref{nonhom_omegas}, from the continuity of $\rho_\ve(x)$ and $J_\ve(x)$, one concludes that $\rho_\ve$ is necessarily continuously differentiable, whereas the solution of \eqref{pde_left}-\eqref{pde_right} possesses a discontinuous derivative at $x_0$ for $v_L\neq v_R$ which follows from \eqref{cont_fluxes_eps} and is written as 
\begin{equation}
\rho'_\ve(x_0^+)-\rho'_\ve(x_0^-)=\dfrac{2}{\ve v_Rv_L (v_R-v_L)} J_{\ve}(x_0).
\end{equation}
 Therefore, problems for inhomogeneous $v(x)$ and inhomogeneous $\Omega_{A,D}(x)$ are not equivalent.  
\end{remark}

\medskip 

By analogy with function $g$ from \eqref{def_of_gg} in the case of homogeneous microtubules, we introduce $g_L(x;s_L,a_L)$ and $g_R(x;s_R,a_R)$ as solutions of the following initial value problems: 
\begin{equation}
\label{def_of_g_LR}
\left\{\begin{array}{l}v_L(1-2g_L)\partial_x g_L=\Omega_A-(\Omega_A+\Omega_D)g_L,~~~g_L(s_L;s_L,a_L)=a_L,\\
v_R(1-2g_R)\partial_x g_R=\Omega_A-(\Omega_A+\Omega_D)g_R,~~~g_R(s_R;s_R,a_R)=a_R.\end{array}\right.
\end{equation} 

In what follows we consider separately two cases:
\begin{itemize}
\item[(i)] {\it fast - slow microtubule}: $v_L>v_R$,
\item[(ii)] {\it slow - fast microtubule}: $v_R>v_L$.
\end{itemize}
 
Before we formulate our main result for these two cases, we note that the difficulty in the determination of the outer solution $\rho_0(x)$ is finding  the value of $\rho_0$ at the interface,  $A:=\rho_0(x_0)$. Once $A$ is found, one can use Theorem~\ref{thm:homog} to restore $\rho_0(x)$ in both intervals $[0,x_0]$ and $[x_0,1]$. 

\smallskip 

\begin{theorem}\label{thm:nonhomog1}
	Consider $v_L>v_R$ and assume that condition \eqref{no_gamma_l} holds with $v_0=v_L$ and $\ell=x_0$. 
	Let $\rho_0(x)$ be the outer solution of system \eqref{pde_left}-\eqref{pde_right} equipped with coupling conditions \eqref{continuity_of_rho_ve}-\eqref{cont_fluxes_eps}. Then function $\rho_0(x)$ has a jump at $x=x_0$ and is given by 
	\begin{equation}\label{repr_formula_1}
	\rho_0(x)=\left\{
	\begin{array}{ll}
	\alpha,& x=0,\\
	g_L(x;0,\alpha),& 0<x<\max\{0,x_J\},\\
	g_L(x;x_0,A),&\max\{0,x_J\}< x\leq x_0,\\
	g_R(x;1,\max\{0.5,1-\beta\}),&x_0<x<1,\\
	1-\beta,&x=1,
	\end{array}
	\right.
	\end{equation}
	where $x_J$ is determined from the continuity of fluxes:
	\begin{eqnarray}
	g_L(x_J;0,\alpha)+g_L(x_J;x_0,A)=1,
	\end{eqnarray}
	and
	$
	A=\left(v_L+\sqrt{v_L^2-4J^2_R}\right)/(2v_L)
	$ 
	where 
	$
	J_R=v_Rg_R(1-g_R)$  with $$g_R:=g_R(x_0;1,\max\{0.5,1-\beta\}).$$
	\end{theorem}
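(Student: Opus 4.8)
The plan is to reduce the coupled two-region problem to two standard homogeneous problems, to which Theorem~\ref{thm:homog} applies, with all of the coupling encoded in the single unknown interface value $A:=\rho_0(x_0^-)$. I would proceed in three stages: determine the right part of $\rho_0$ (which is driven entirely by the right boundary condition), use flux continuity at $x_0$ to pin down $A$, and then reconstruct the left part from the homogeneous theory.

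First I would analyze the right subinterval $(x_0,1)$. On it the only Dirichlet datum is $\rho_\ve(1)=1-\beta$, since the interface $x_0$ contributes a flux-matching condition rather than a prescribed density. In the outer limit the equation on $(x_0,1)$ is first order and needs exactly one condition; an interior jump would require an independent second datum at $x_0^+$, which is absent, so there is no jump and the right part is a single branch. Applying Corollary~\ref{thm2:homog} to the homogeneous microtubule of velocity $v_R$ on $[x_0,1]$ then identifies it as $g_R(x;1,\max\{0.5,1-\beta\})$, which one checks stays above $1/2$ and is defined all the way back to $x_0$, since leftward it is attracted to the fixed point $\Omega_A/(\Omega_A+\Omega_D)$. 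This produces $g_R:=g_R(x_0;1,\max\{0.5,1-\beta\})$ and, from the limiting flux $J_0=v\rho(1-\rho)$, the interface flux $J_R=v_Rg_R(1-g_R)$.

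Next I would impose the flux-continuity condition \eqref{limiting_continuity_of_fluxes}, which at $x_0$ reads $v_LA(1-A)=J_R$. Solving this quadratic in $A$, whose two roots are symmetric about $1/2$, and taking the larger root yields the value $A$ recorded in the statement. The root selection is the essential point: I would argue $A>1/2$ from the phase-portrait analysis of Appendix~\ref{appendix:homog}, in which the upper ($\rho>1/2$) branch is the admissible/stable one, reflecting that in the fast--slow transition motors back up into the fast region just before $x_0$ and form a high-density jam there. Because $v_L>v_R$ and $\rho\mapsto\rho(1-\rho)$ is decreasing above $1/2$, flux continuity then forces $A>g_R>1/2$, so $\rho_0(x_0^-)=A\neq g_R=\rho_0(x_0^+)$ and the claimed jump at $x_0$ is genuine.

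Finally, with $A>1/2$ in hand, I would recover the left part by applying Theorem~\ref{thm:homog} to the homogeneous microtubule of velocity $v_L$ on $[0,x_0]$ with left value $\alpha$ and effective right value $A$; this reproduces the first three lines of \eqref{repr_formula_1}. Here condition \eqref{no_gamma_l} (with $v_0=v_L$, $\ell=x_0$) guarantees that the left-boundary branch $g_L(x;0,\alpha)$ reaches density $1/2$ within $[0,x_0]$, placing the interior jump $x_J$ genuinely inside $(0,x_0)$ in the jamming regime; since both branches adjacent to $x_J$ share the velocity $v_L$, the flux-balance condition there reduces (after cancelling $v_L$) to $g_L(x_J;0,\alpha)+g_L(x_J;x_0,A)=1$, exactly as in \eqref{def_of_x_J_2}. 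The main obstacle, in my view, is the rigorous justification of the root choice $A>1/2$ together with the absence of an interior jump on the right: neither follows from flux continuity alone, which admits the spurious low-density root and a spurious right-side jump, and both must be resolved by the stability/admissibility of the upper branch established through the phase portrait in the Appendix.
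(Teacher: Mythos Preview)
Your overall decomposition (right branch $\to$ interface value $A$ $\to$ left branch via Theorem~\ref{thm:homog}) matches the paper's, and the formula you reach is the same. The difference lies in how the ``main obstacle'' you flag is dispatched. The paper does \emph{not} appeal to phase-portrait stability to select the root $A>1/2$; it uses a short algebraic observation instead. Writing $A_L:=\rho_0(x_0^-)$ and $A_R:=\rho_0(x_0^+)$, flux continuity \eqref{limiting_continuity_of_fluxes} together with $v_L>v_R$ gives
\[
A_L(1-A_L)=\tfrac{v_R}{v_L}\,A_R(1-A_R)\le \tfrac{v_R}{4v_L}<\tfrac14,
\]
so $A_L\neq 1/2$. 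Condition~\eqref{no_gamma_l} on $[0,x_0]$ (via Corollary~\ref{thm2:homog}(i)) forces $A_L\ge 1/2$, hence $A_L>1/2$ and therefore $A=A_L$. This simultaneously fixes the root and, since now the right interval has left value $A>1/2$, Corollary~\ref{thm2:homog}(iii) yields immediately that there is no interior jump on $(x_0,1)$. Thus both issues you identify are resolved by a one-line inequality rather than stability of the upper branch.

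One point to correct: your heuristic ``the interface contributes only a flux condition, not a density datum, so the right side cannot have an interior jump'' is not valid reasoning. The flux-matching condition \emph{is} a second datum for the right subproblem, so in principle a right-side jump is not excluded on those grounds; it is excluded because the effective left value there satisfies $A>1/2$. The paper also opens with a direct $\varepsilon$-level argument that some boundary layer must sit at $x_0$ (from \eqref{cont_fluxes_eps}, one of $\rho_\varepsilon'(x_0^\pm)$ is $O(\varepsilon^{-1})$), whereas you obtain the jump only a posteriori from $A\neq g_R$; both routes are fine, but the direct argument is worth knowing.
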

\begin{proof}
First, we show that 
{the outer solution $\rho_0(x)$ has a jump at $x=x_0$}. Indeed, from continuity of fluxes \eqref{cont_fluxes_eps} with $\rho_\ve(x_0)=A+o(1)$ we get:
 \begin{equation}
 v_L\rho_{\ve}'(x_0^-)=v_R \rho_{\ve}'(x_0^+) + \dfrac{v_L-v_R}{2\ve}A(1-A)+o\left(\dfrac{1}{\ve}\right).
 \end{equation}
Since $A$ is strictly between 0 and 1, we find that one of the derivatives (either the left or right one) is of the order $\ve^{-1}$  corresponding to a jump.

Next, denote the limits of the outer solution from the left and right at $x_0$ by 
\begin{equation*} 
A_L:=\rho_0|_{x\to x_0^-}\text{ and }A_R:=\rho_0|_{x\to x_0^+}.
\end{equation*} 
Then the continuity of fluxes \eqref{limiting_continuity_of_fluxes} is written as $v_LA_L(1-A_L)=v_R A_R(1-A_R)$, and since $v_L>v_R$ we have $A_L(1-A_L)<1/4$. By applying Corollary~\ref{thm2:homog} (ii) we find $A=A_L>0.5$ and  thus, there is no jump from the left at $x_0$. Then, following Corollary~\ref{thm2:homog} (iii), there is no interior jump, and $\rho_0(x)$ in interval $[x_0,1]$ is determined by Corollary~\ref{thm2:homog} (i), which is a boundary condition for $\rho_0(x)$ at $x_0=1$. 

Specifically, by Theorem~\ref{thm:homog} we have $\rho_0(x)=g_R(x;1,\max\{0.5,1-\beta\})$ for $x_0<x<1$.
Then $A_R=g_R:=g_R(x_0;1,\max\{0.5,1-\beta\})$
  and $A$ is the solution of the quadratic equation $v_LA(1-A)=v_RA_R(1-A_R)$, which is strictly greater than $0.5$.  Thus, we found $A$, and the expression for $\rho_0(x)$ in $[0,x_0]$ is found by using the representation formula \eqref{construction_of_outer_solution} from Theorem~\ref{thm:homog}.   
  \qed
\end{proof}

\begin{figure}
	\begin{center}
		\includegraphics[width=0.475\textwidth]{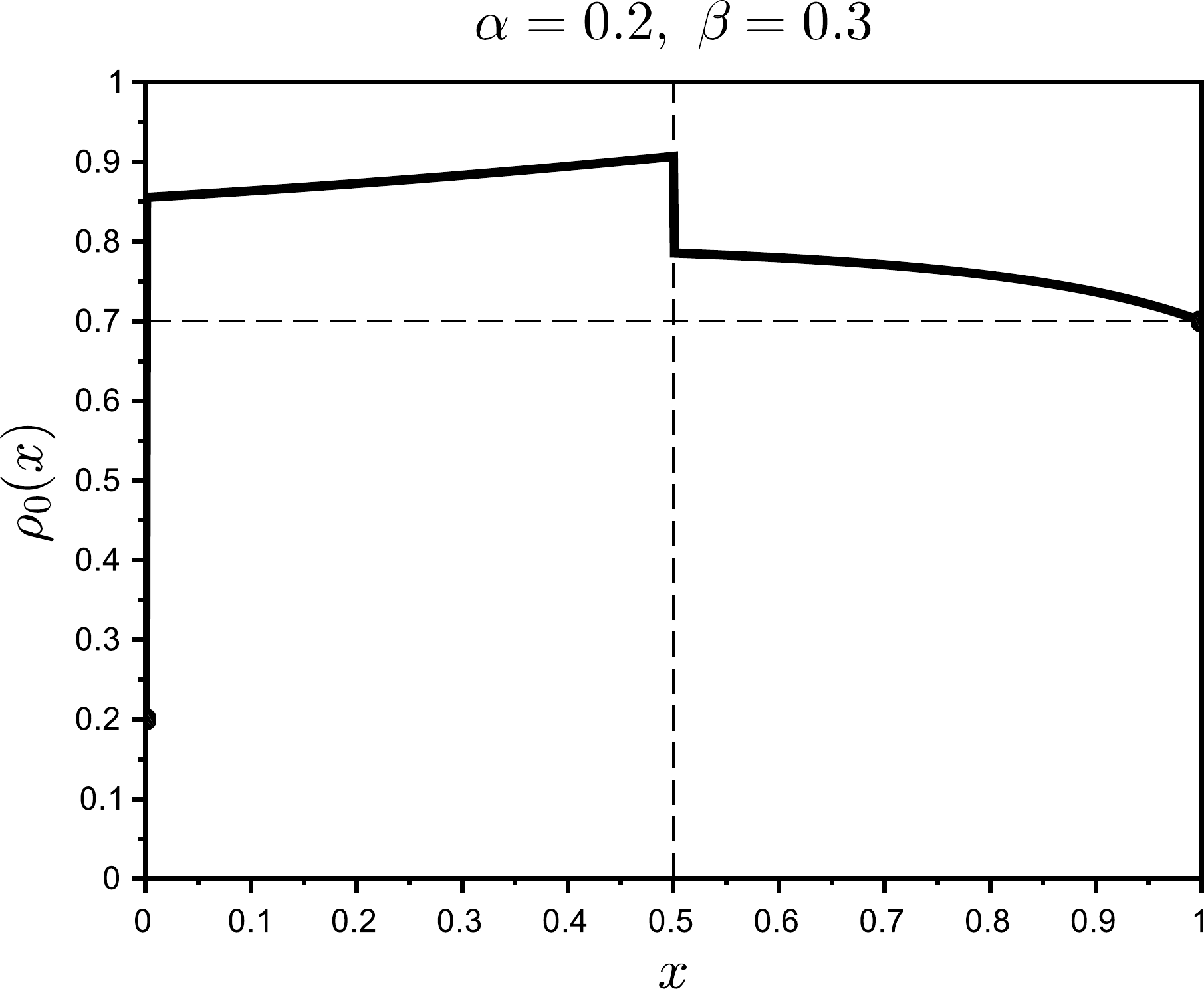}
		\includegraphics[width=0.475\textwidth]{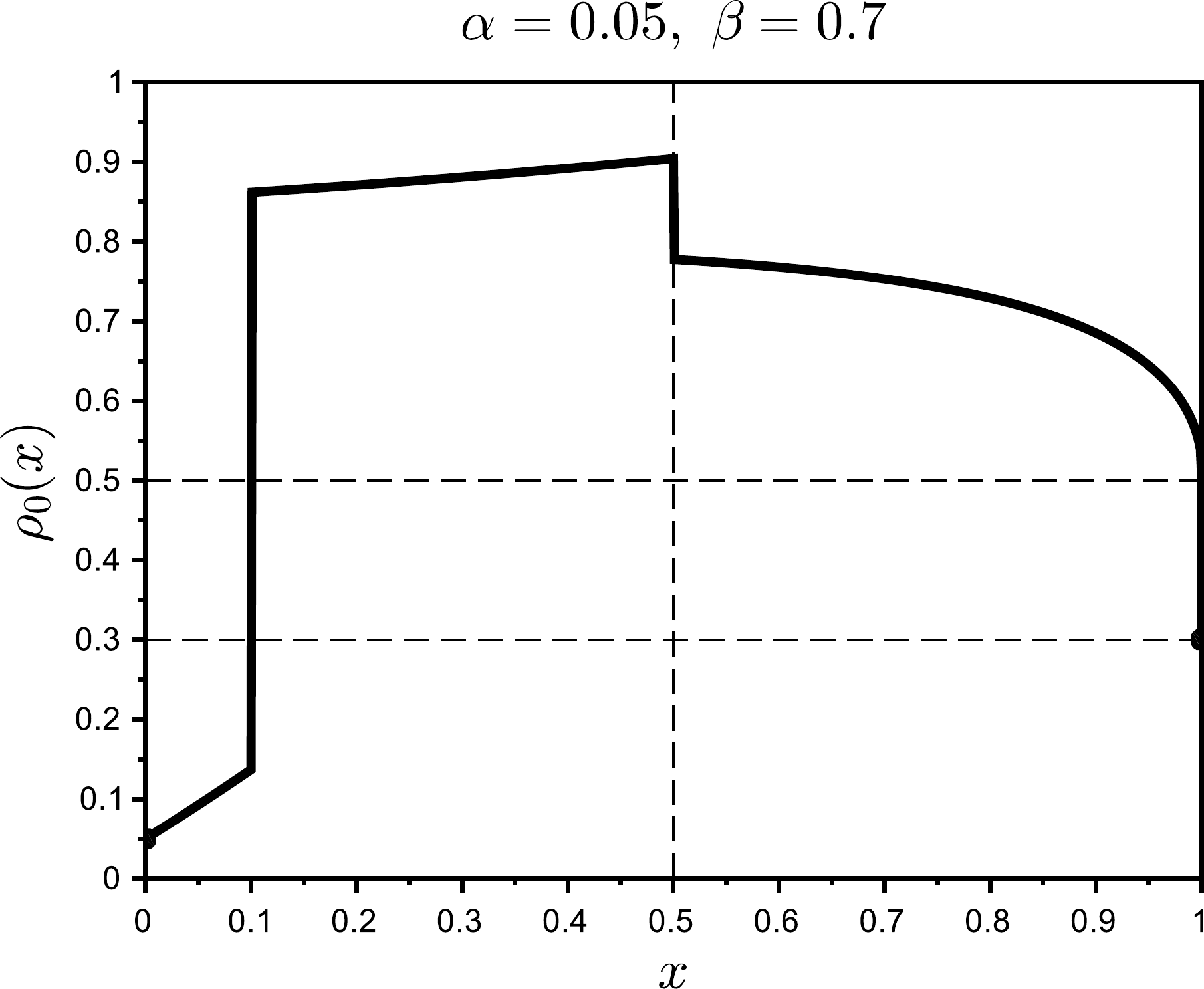}\\
			\includegraphics[width=0.475\textwidth]{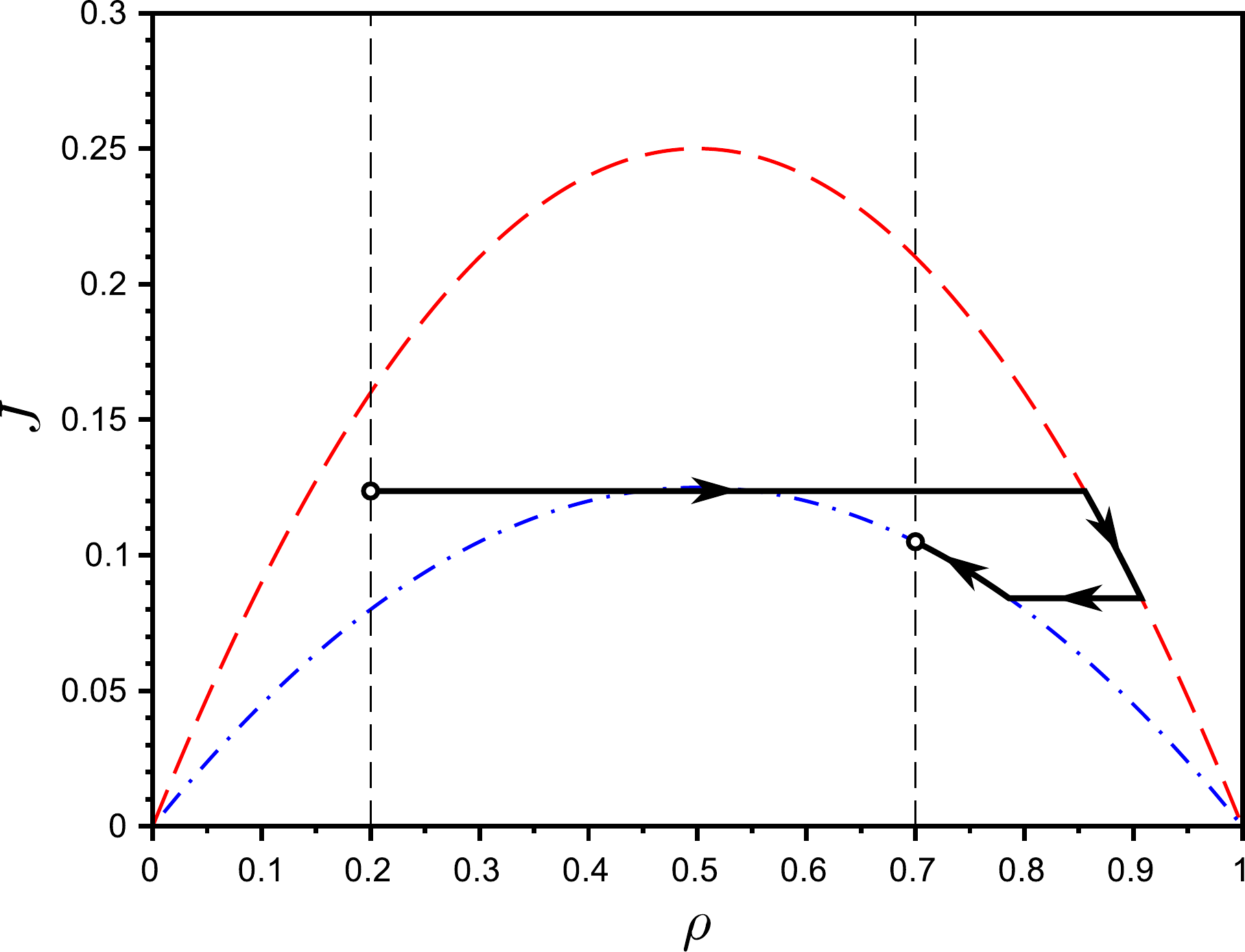}
		\includegraphics[width=0.475\textwidth]{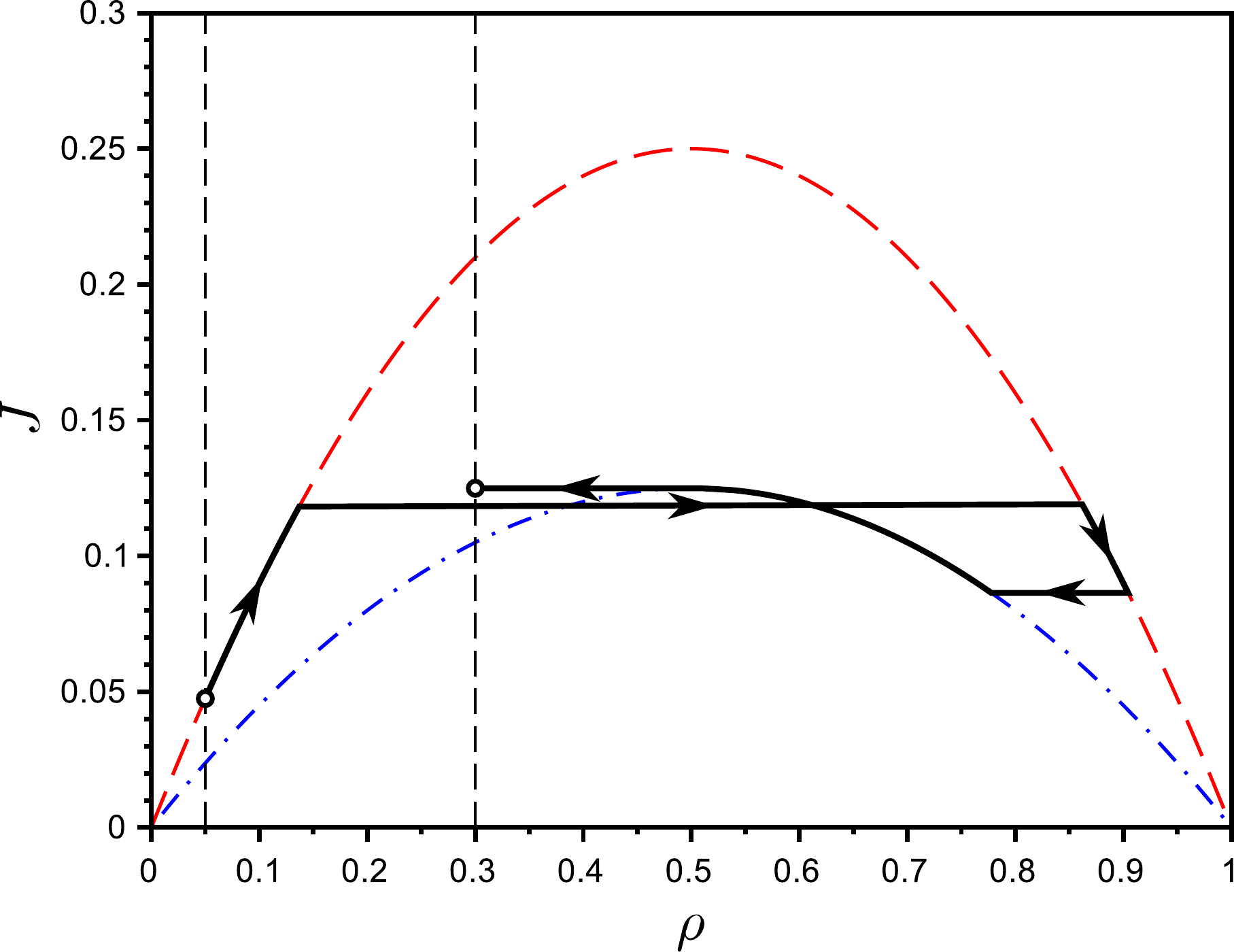}
		\caption{Examples 1 \& 2 for inhomogeneous microtubulus. The upper row depicts plots of outer solutions $\rho_0(x)$. The lower row depicts solutions as trajectories in the $(\rho,J)$-plane; note that these trajectories are continuous in $J$ and have discontinuities (jumps) in $\rho$. Red (lower) and blue (upper) dashed lines are arcs $J=v_R\rho(1-\rho)$ and $J=v_L\rho(1-\rho)$, respectively. }
		\label{fig:nohomog-ex1}
	\end{center}
\end{figure}

\smallskip 

Next, we illustrate formula \eqref{repr_formula_1} of Theorem~\ref{thm:nonhomog1} by the following two examples with $x_0=0.5$, $v_L=1.0$, $v_R=0.5$, $\Omega_A=0.8$ and $\Omega_D=0.2$.  

\smallskip 

 \noindent{\bf Example 1.} $\alpha=0.2$ and $\beta=0.3$.
\begin{equation*}
\rho_0(x)=\left\{
\begin{array}{ll}
0.2,& x=0,\\
g_L(x;0.5,0.907),&0<x\leq 0.5,\, (A=0.5(1\pm \sqrt{1-4J_R})\approx 0.907)\\
g_R(x;1.0,0.7),&0.5<x\leq 1.0.
\end{array}
\right.
\end{equation*}
 
\smallskip

\noindent{\bf Example 2.} $\alpha=0.05$ and $\beta=0.7$. 
\begin{equation*}
\rho_0(x)=\left\{
\begin{array}{ll}
g_L(x;0.0,0.05),& 0\leq x<x_J,\, (x_J\approx 0.101)\\
g_L(x;0.5,0.904),&x_J<x\leq 0.5,\\
g_R(x;1.0,0.5+),&0.5<x<1.0,\, (A\approx 0.904)\\
0.3,&x=1.0.
\end{array}
\right.
\end{equation*}

Solutions from Examples 1 and 2 are depicted in Fig.~\ref{fig:nohomog-ex1}. The main difference between Examples 1 and 2 is that in Example 2 there is an interior jump in the left sub-interval $(0,0.5)$, while the solution in Example 1 does not possesses a jump in neither of the two sub-intervals, left $(0,0.5)$ and right $(0.5,1)$ (or, equivalently, $x_J=0$ in \eqref{repr_formula_1} for Example 1).

\medskip 

\begin{theorem}\label{thm:nonhomog2}
	Consider $v_R>v_L$ and assume condition
	\eqref{no_gamma_l}  holds with both $v_0=v_L$, $\ell=x_0$ and $v_0=v_R$, $\ell=1-x_0$. 
	Let $\rho_0(x)$ be the outer solution of system \eqref{pde_left}-\eqref{pde_right} equipped with coupling conditions \eqref{continuity_of_rho_ve}-\eqref{cont_fluxes_eps}. 
Denote also 
\begin{equation}\label{def_of_hat_ar}
\hat{A}_R:=g_R(x_0;1,\max\{0.5,1-\beta\}).
\end{equation} 	
Then 
\begin{equation}\label{determination_of_A_42}
A=\left\{\begin{array}{ll}\dfrac{1}{2}-\sqrt{\dfrac{1}{4}-\dfrac{v_L}{4v_R}},& \qquad  v_R{\hat{A}_R(1-\hat{A}_R)> 0.25 v_L},\\
&\\
\dfrac{1}{2}+ \sqrt{\dfrac{1}{4}-\dfrac{v_R}{v_L}\hat{A}_R(1-\hat{A}_R)},& \qquad v_R{\hat{A}_R(1-\hat{A}_R)\leq 0.25 v_L},
 \end{array}\right.
\end{equation}
	and function $\rho_0(x)$ is given by
	\begin{equation}\label{formula_from_thm42}
	\rho_0(x)=\left\{
	\begin{array}{ll}
	\alpha,& \qquad x=0,\\
	g_L(x;0,\alpha),& \qquad  0<x<\max\{0,x_J^{(L)}\},\\
	g_L(x;x_0,\max\{A,0.5\}),& \qquad \max\{0,x_J^{(L)}\}\leq x<x_0,\\
	g_R(x;x_0,\min\{A,0.5\}),&\qquad  x_0\leq x\leq x_J^{(R)},\\
	g_R(x;1,\max\{0.5,1-\beta\}),& \qquad  x_J^{(R)}<x<1,\\
	1-\beta,& \qquad x=1.
	\end{array}
	\right.
	\end{equation} 
Here $x_J^{(L)}$ and $x_J^{(R)}$ are determined from the continuity of fluxes:
\begin{eqnarray*}
&g_L(x_J^{(L)};0,\alpha)+g_L(x_J^{(L)};x_0,\max\{A,0.5\})=1,&\\
&g_R(x_J^{(R)};x_0,\min\{A,0.5\})+g_R(x_J^{(R)};1,\max\{0.5,1-\beta\})=1.&
\end{eqnarray*}
\end{theorem}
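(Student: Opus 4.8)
The plan is to follow the same two-stage strategy as in the proof of Theorem~\ref{thm:nonhomog1}: first pin down the interface flux together with the one-sided limits $A_L:=\rho_0|_{x\to x_0^-}$ and $A_R:=\rho_0|_{x\to x_0^+}$, and then reconstruct $\rho_0$ on each homogeneous subinterval by invoking Theorem~\ref{thm:homog}. As there, I would first read off from \eqref{cont_fluxes_eps} that, since $v_L\neq v_R$ and $A(1-A)$ is bounded away from $0$, one of the one-sided derivatives $\rho_\ve'(x_0^\pm)$ is of order $\ve^{-1}$; hence an inner layer sits at $x_0$. Passing to the limit, the flux-continuity relation \eqref{limiting_continuity_of_fluxes} reads $v_LA_L(1-A_L)=v_RA_R(1-A_R)$, which together with the admissible boundary-layer orbit is all that is needed to fix the interface value $A$.

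The essential difference from Theorem~\ref{thm:nonhomog1} is that here the slow segment $[0,x_0]$ lies \emph{upstream} of the fast one, so it can act as a bottleneck, and this is what forces the dichotomy in \eqref{determination_of_A_42}. First I would apply Corollary~\ref{thm2:homog} to the fast right segment (with $v_0=v_R$, $\ell=1-x_0$, using the hypothesis that \eqref{no_gamma_l} holds there) to identify the flux this segment would carry from its boundary datum alone, namely $J_R^0:=v_R\hat A_R(1-\hat A_R)$ with $\hat A_R$ as in \eqref{def_of_hat_ar}. The selection of $A$ is then governed by comparing $J_R^0$ with the maximal flux $v_L/4$ that the slow segment can sustain. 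If $J_R^0\le v_L/4$ (the second line of \eqref{determination_of_A_42}), the slow segment accommodates the incoming flux, so the common interface flux is $J_R^0$, the right limit is $A_R=\hat A_R$, and solving $v_LA(1-A)=J_R^0$ on the high-density branch gives $A>1/2$. If instead $J_R^0>v_L/4$ (the first line), the slow segment saturates at maximal current, the interface flux equals $v_L/4$, the left limit is $A_L=1/2$, and solving $v_RA(1-A)=v_L/4$ on the low-density branch gives $A<1/2$, the surplus being absorbed downstream through an interior jump.

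With $A$ and the interface fluxes fixed, the reconstruction is routine. On $[0,x_0]$ I would apply Theorem~\ref{thm:homog} with left datum $\alpha$ and interface datum $\max\{A,1/2\}$, producing $g_L(x;0,\alpha)$ and $g_L(x;x_0,\max\{A,1/2\})$ joined at the interior jump $x_J^{(L)}$ fixed by flux continuity; the hypothesis \eqref{no_gamma_l} with $v_0=v_L,\ \ell=x_0$ is exactly what guarantees that the left-boundary branch cannot fill all of $[0,x_0]$, so the matching is well posed. On $[x_0,1]$ the same reasoning yields the low-density branch $g_R(x;x_0,\min\{A,1/2\})$ (nonempty only in the bottleneck case) meeting the boundary branch $g_R(x;1,\max\{1/2,1-\beta\})$ at $x_J^{(R)}$, where \eqref{no_gamma_l} with $v_0=v_R,\ \ell=1-x_0$ plays the analogous role. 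Assembling these pieces reproduces \eqref{formula_from_thm42}.

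The hard part is the flux- and branch-selection at $x_0$: proving rigorously that the admissible boundary-layer orbit is the one claimed and not its mirror image, i.e. that in the bottleneck case the downstream fast segment must sit on the \emph{low}-density branch, while in the non-bottleneck case the upstream slow segment must sit on the \emph{high}-density branch. This is precisely the stability analysis of the phase portrait of system \eqref{key_system} carried out in Appendix~\ref{appendix:homog} for Theorem~\ref{thm:homog}, which selects the stable connection and discards the inadmissible ones; I would invoke it at the interface to rule out the wrong branch in each case. Once the correct branch is singled out, determining $A$ reduces to solving the quadratic flux relation $v_LA(1-A)=J_R^0$ or $v_RA(1-A)=v_L/4$, which yields \eqref{determination_of_A_42}.
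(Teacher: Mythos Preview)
Your proposal is correct and follows essentially the same route as the paper's proof: define $A_L$ and $A_R$, use flux continuity \eqref{limiting_continuity_of_fluxes}, split into the two cases according to whether $v_R\hat A_R(1-\hat A_R)\le v_L/4$, and then rebuild $\rho_0$ on each subinterval via Theorem~\ref{thm:homog}. The only cosmetic difference is emphasis: the paper starts by invoking Corollary~\ref{thm2:homog}(i) on the \emph{left} interval to get $A_L\ge 1/2$ (hence the interface flux is at most $v_L/4$) and then checks solvability of the quadratic, whereas you frame the same dichotomy as a bottleneck comparison of $J_R^0$ against the maximal left-segment current $v_L/4$; the branch selection you defer to the phase-portrait analysis is exactly what the paper packages as Corollary~\ref{thm2:homog}(iii)--(iv).
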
	

\begin{proof}
As in the proof of Theorem~\ref{thm:nonhomog1}, introduce $A_L:=\rho_0|_{x\to x_0^-}$ and $A_R~:=~\rho_0|_{x\to x_0^+}$. Then continuity of fluxes at $x_0$ reads 
\begin{equation}\label{cont_of_fluxes_in_proof}
v_LA_L(1-A_L)=v_RA_R(1-A_R).
\end{equation} 
By Corollary~\ref{thm2:homog} (i) we get $A_L\geq 0.5$. 
Next, we consider two cases. If $$
\dfrac{v_R}{v_L}\hat{A}_R(1-\hat{A}_R)\leq 1/4,
$$
then $A_R=\hat{A}_R$ and $A_L\geq 0.5$ solves \eqref{cont_of_fluxes_in_proof}, that is, 
\begin{equation*}
A=A_L = \dfrac{1}{2}+ \sqrt{\dfrac{1}{4}-\dfrac{v_R}{v_L}\hat{A}_R(1-\hat{A}_R)}.
\end{equation*}
If
$
\dfrac{v_R}{v_L}\hat{A}_R(1-\hat{A}_R)> 1/4,
$
then $\rho_0(x)$ does not coincide with $g_R(x;1,\max\{0.5,1-\beta\})$ on the entire interval $(x_0,1)$, there is an interior jump at $x_0<x_J^{(R)}<1$ and by Corollary~\ref{thm2:homog} (iii) and (iv), $A_R<1/2$ and $A=A_R$. Then $A_L=1/2$ and $\rho_0(x)$ on intervals $(0,x_0)$ and $(x_0,1)$ is found by \eqref{construction_of_outer_solution}. 
\qed 
\end{proof}

\smallskip 

Next we illustrate formula \eqref{formula_from_thm42} by two examples with $v_R>v_L$. Namely, set $v_L=0.5$ and $v_R=1.0$ and consider $x_0=0.5$, $\Omega_A=0.8$ and $\Omega_D=0.2$. 

\begin{figure}[t]
	\begin{center}
		\includegraphics[width=0.475\textwidth]{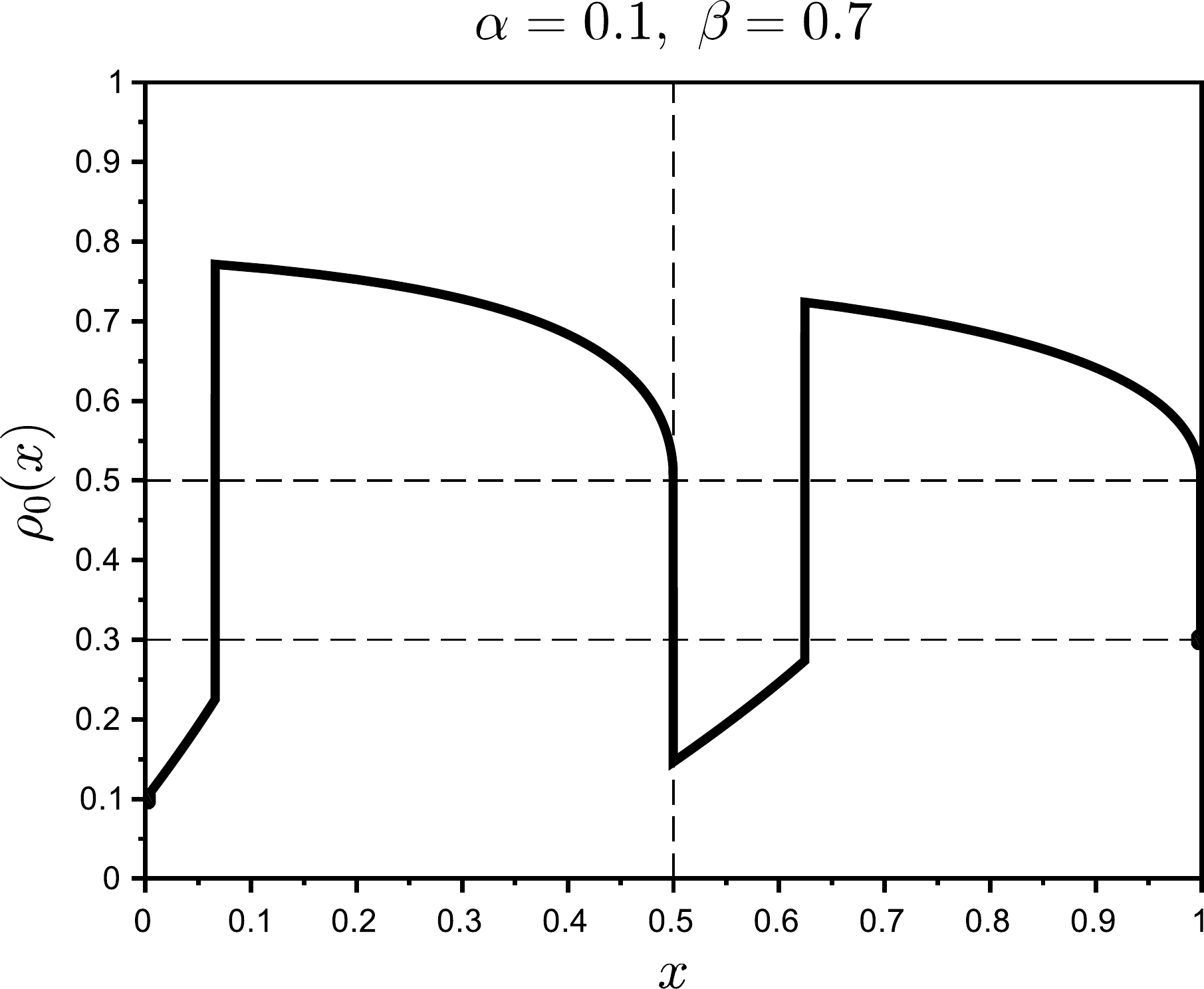}
		\includegraphics[width=0.475\textwidth]{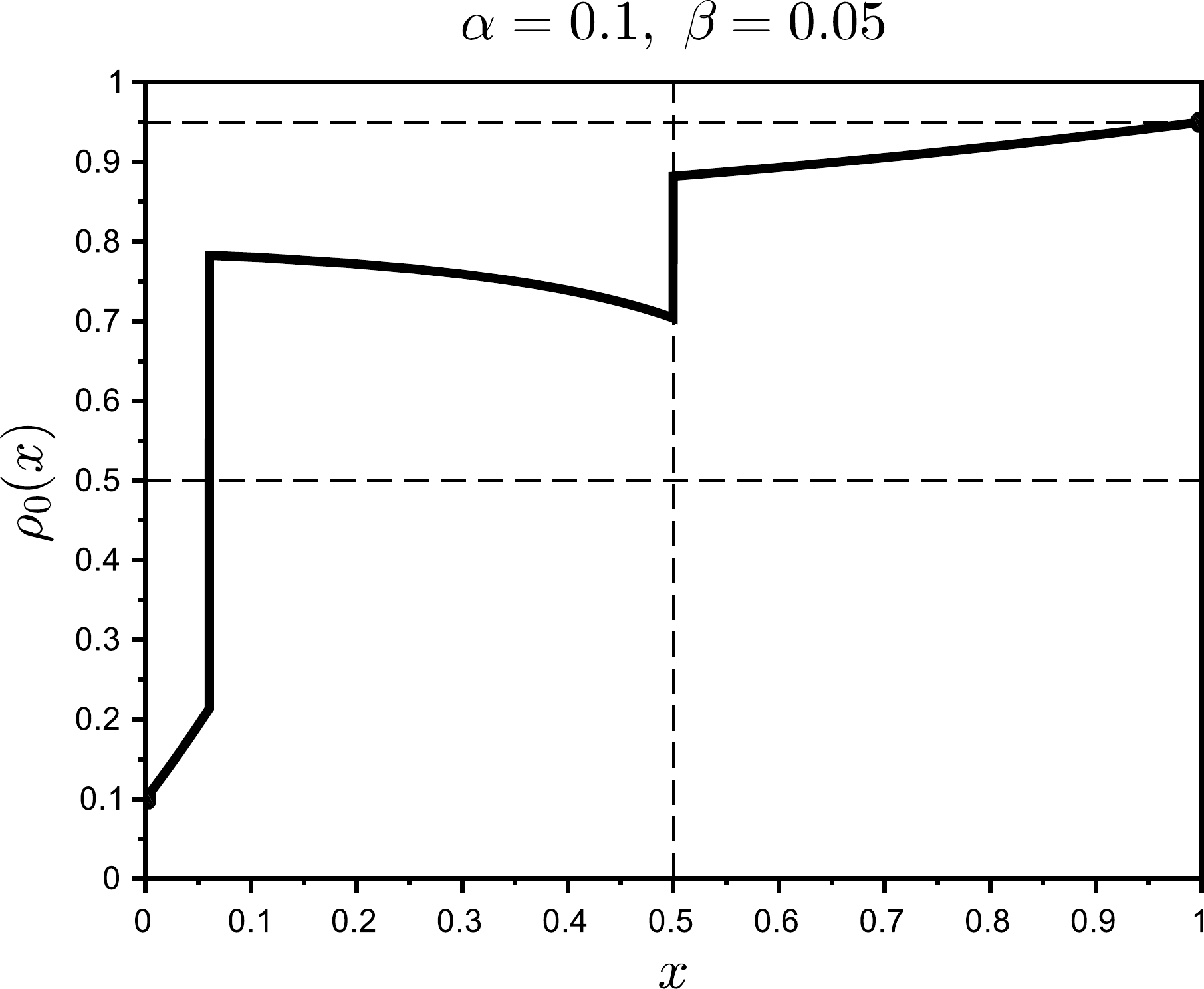}\\
			\includegraphics[width=0.475\textwidth]{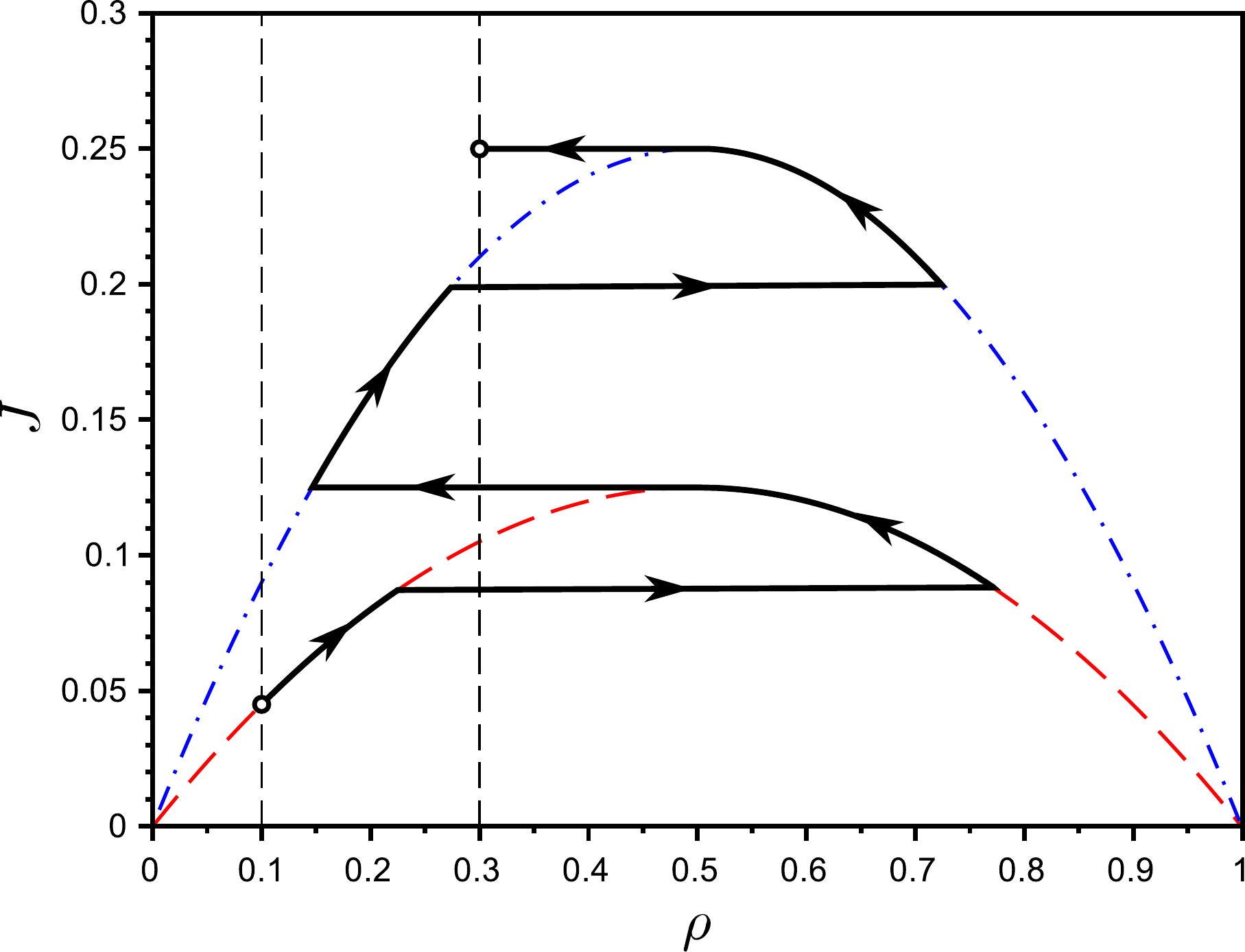}
		\includegraphics[width=0.475\textwidth]{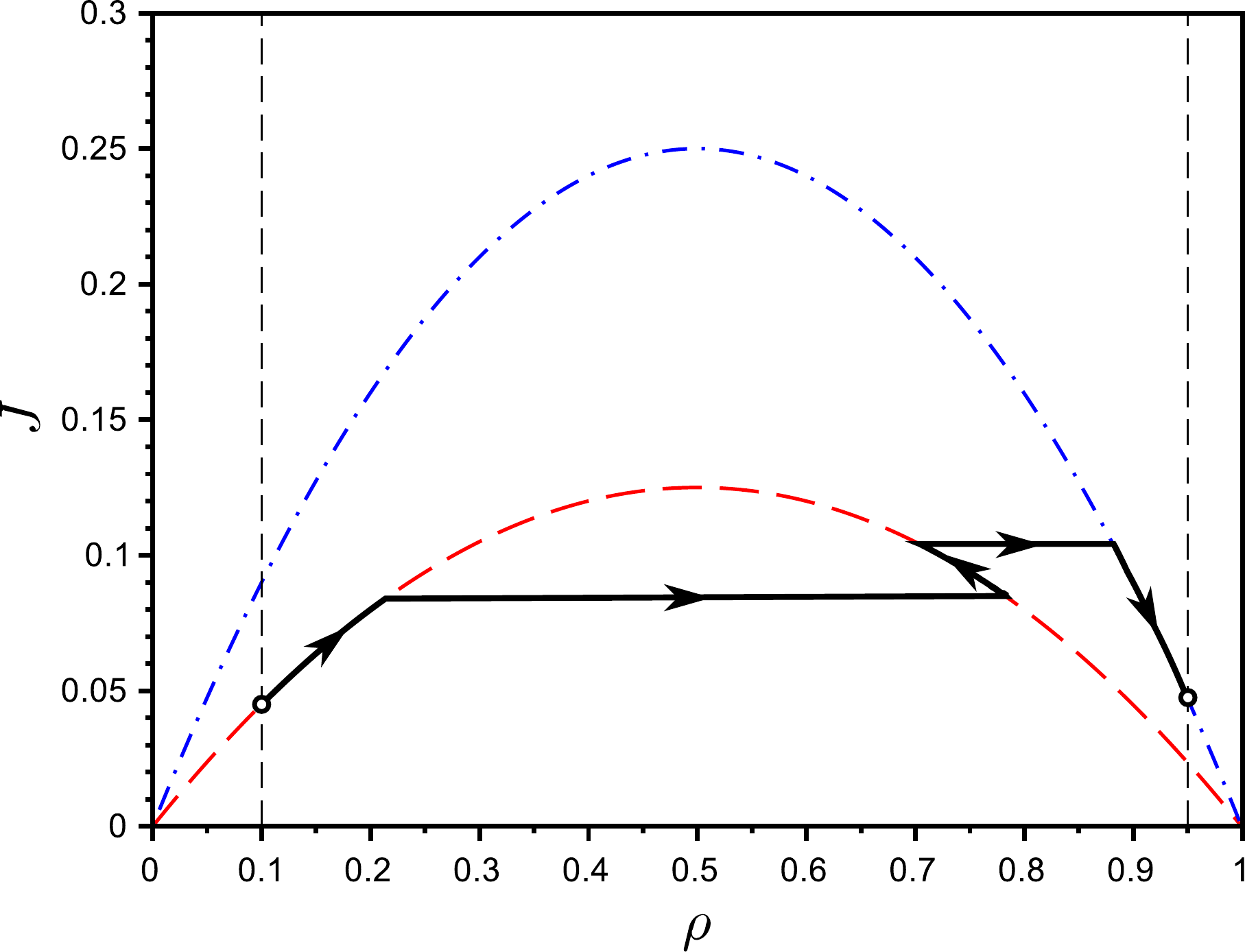}
		\caption{Examples 3 \& 4 for a inhomogeneous microtubule. The upper row depicts plots of outer solutions $\rho_0(x)$. The lower row depicts solutions as trajectories in $(\rho,J)$ plane, red (lower) and blue (upper) dashed lines are arcs $J=v_L\rho(1-\rho)$ and $J=v_R\rho(1-\rho)$, respectively.  }
		\label{fig:nohomog-slow}
	\end{center}
\end{figure}

\medskip 

\noindent{\bf Example 3.} $\alpha=0.1$ and $\beta=0.7$.
\begin{equation*}
\rho_0(x)=\left\{
\begin{array}{ll}
g_L(x;0.0,0.1),& \qquad  0\leq x<x_J^{(L)}, \,(x_J^{(L)}\approx 0.097)\\
g_L(x;0.5,0.5+),& \qquad x_J^{(L)}<x< 0.5,\\
g_R(x;0.5,A),& \qquad  0.5 \leq  x < x_J^{(R)},\, (A=\frac{1}{2}(1-\frac{1}{\sqrt{2}}), \, x_J^{(R)}\approx 0.624)\\
g_R(x;1.0,0.5+),& \qquad  x_J^{(R)}<x<1.0,\\ 
0.3,& \qquad  x=1.0.
\end{array}
\right.
\end{equation*}
\noindent{\bf Example 4.} $\alpha=0.1$ and $\beta=0.05$. 
\begin{equation*}
\rho_0(x)=\left\{
\begin{array}{ll}
g_L(x;0.0,0.1),& \qquad 0\leq x<x_J^{(L)}, \,(x_J^{(L)}\approx 0.09)\\
g_L(x;0.5,A),& \qquad  x_J^{(L)}\leq x\leq 0.5, \,(A\approx 0.703)\\
g_R(x;1.0,0.95),& \qquad  0.5 <  x \leq 1.
\end{array}
\right.
\end{equation*}

Solutions from Examples 3 and 4 are depicted in Fig.~\ref{fig:nohomog-slow}. These two examples illustrate two possibilities for slow-fast microtubules: when $\rho_0(x)$ is continuous from the right at $x_0$ (Example 3) and when it is continuous from the left at $x_0$ (Example 4).  

\section{Monte Carlo Simulations}\label{sec:num}

We now show that the results from Section~\ref{sec:nonhomo} involving the mean-field continuous model in the inhomogeneous case are consistent with those from Monte Carlo simulations.  To this end, we return to the discrete problem \eqref{original_system_stationary}-\eqref{bc} with $M=500$ lattice sites. We perform $R=10^4$ realizations and compare the resulting discrete densities with the continuum equation \eqref{mean_field_original} as $\ve \to 0$, computed by representation formulas from Theorem~\ref{thm:nonhomog1} and \ref{thm:nonhomog2}. 

\begin{figure}[t]
	\begin{center}
		\includegraphics[width=0.475\textwidth]{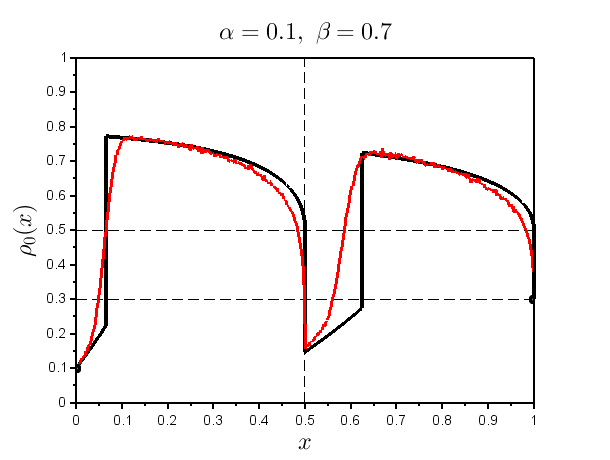}
		\includegraphics[width=0.475\textwidth]{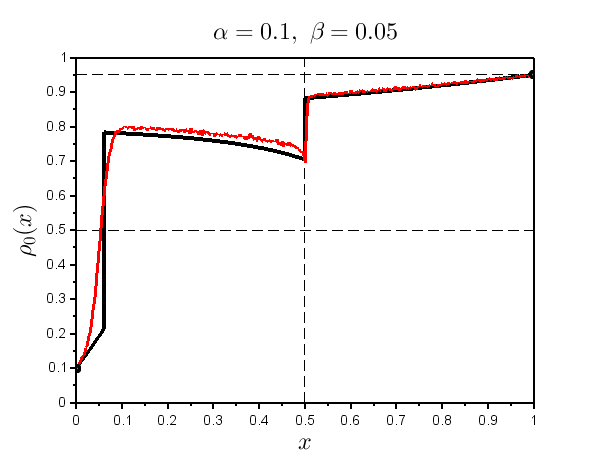}
		\caption{Examples 3 \& 4 are simulated with $R=10^4$ realizations of the lattice model, described in Section~\ref{sec:num}.  Simulations use the same parameters as Examples 3 \& 4 for inhomogeneous microtubules above (analytical solution is solid black line; see also Figure~\ref{fig:nohomog-slow}).  We see a strong quantitative agreement between the Monte Carlo (red) and analytical (black) solutions.  
		}
		\label{fig:MCnohomog-slow}
	\end{center}
\end{figure}

Specifically, we adapt a similar algorithm to the one in \cite{ParFraFre2003}. In each realization $r=1,...,R$ we consider tuple $\{\nu^{(n)}(i)\}$ where $n$ stands for the iteration step number and $\nu^{(n)}(i)=1$ if the $i$th site is occupied at the $n$th iteration step and $\nu^{(n)}(i)=0$, if otherwise. Initially, microtubule is empty, i.e.,  $\nu^{(0)}(i)=0$ for $i=1,...,M$. For each time step,  $n=1,...,N$ discrete dynamics of $\left\{\nu^{(n)}(i)\right\}_{i=1}^{M}$ is described by the following procedure:   
\begin{itemize}
	\item[{\bf 1.}] Choose randomly site $i$ (all sites are equiprobable).
	\item[{\bf 2.}] If $i=1$ and $\nu^{(n)}(1)=0$, then $\nu^{(n+1)}(1)=1$ with probability $\alpha$.
	\item[{\bf 3.}] If $i=M$ and $\nu^{(n)}(M)=1$, then $\nu^{(n+1)}(M)=0$ with probability $\beta$.
	\item[{\bf 4.}] If $1<i<M$, then 
	\begin{itemize}
		\item if $\nu^{(n)}(i)=0$, then $\nu^{(n+1)}(i)=1$ and $\nu^{(n+1)}(i-1)=0$ with probability $v_{i-\frac{1}{2}}$ provided that $\nu^{(n)}(i-1)=1$;
		\item if $\nu^{(n)}(i)=1$, then $\nu^{(n+1)}(i)=0$ and $\nu^{(n+1)}(i+1)=1$ with probability $v_{i+\frac{1}{2}}$ provided that $\nu^{(n)}(i+1)=0$.
	\end{itemize}
	\item[{\bf 5.}] If $1<i<M$, then 
		\begin{itemize}
			\item if $\nu^{(n+1)}(i)=1$ after step 4, then let $\nu^{(n+1)}(i)=0$ with probability $\omega_d$;
			\item if $\nu^{(n+1)}(i)=0$ after step 4, then let $\nu^{(n+1)}(i)=1$ with probability $\omega_a$.
		\end{itemize}
\end{itemize}
Finally, after running $N$ steps and $R$ realizations, we assign $\rho_i^{\text{MCS}}:=\langle \nu_i^{(N)}\rangle_{r}$.

Monte Carlo simulations are in very good agreement with the outer solutions derived in Section~\ref{sec:nonhomo}. Specifically, results of Monte Carlo simulations corresponding to Example 3 and 4 from Section~\ref{sec:nonhomo} are depicted in Fig.~\ref{fig:MCnohomog-slow}; one can see agreement between histograms obtained from Monte Carlo simulations (red) and analytical solutions (black).
  Observe that number of realizations $R$ and number of iteration steps $N$ needed to reach equilibrium are critical for recovering the sharp transitions observed near the interface of the inhomogeneous microtubule. For Monte Carlo histograms in Fig.~\ref{fig:MCnohomog-slow}, we were to take $N=2.5\cdot 10^{6}$ time steps in order to guarantee that a transient solution has reached equilibrium for each of $R=10^4$ realizations.
The large numbers for both realizations and iteration steps lead to the observation that the reproduction of equilibrium profiles (solutions) in Examples 1-4 by Monte Carlo simulations to be very time consuming, whereas Theorems~\ref{thm:nonhomog1} and \ref{thm:nonhomog2} give explicit formulas which require only numerical integration of at most four ODEs for the auxiliary functions $g(x;s,a)$.

\section{Discussion}
In this work, we present a mathematical model to describe dynamics of motor proteins on microtubules. Using methods from asymptotic analysis, we provide closed-form expressions for motor protein density solutions. We also provide verification of the results of mathematical analysis by Monte Carlo simulation with the discrete MT model. The mathematical model may serve as a convenient framework for studying experimental data. Even more, the modeling and analysis may assist in inferring \textit{in vivo} dynamics where biophysical imaging is limited in the crowded cellular environments. It is also important to note that the model presented herein is consistent with prior theoretical results for the homogeneous case (e.g., \cite{ParFraFre2003,ParFraFre2004,Fre11}).

The model approach developed herein provides additional advantages over the prior approach of Frey et al. \cite{Fre18} and others \cite{Klu08,Klu05} while remaining faithful in the homogeneous case.  Most notably, the model is developed to study inhomogeneous regimes where large density profiles can result in the emergence of internal boundary layers.  Beyond the obvious application to motor protein dynamics along a microtubule, this also provides insight into traffic flow problems.  The PDE governing the density of cars has a similar form to the equation governing the density of motor proteins here.  This work also provides an additional example of the power of analyzing discrete ODE model systems by passing to the limit and obtaining a mean-field PDE.

What made this work challenging is that {\it a priori} initial data cannot predict regions of low or high density.  Even within the Monte Carlo simulations we observe that they must be run for a significant length of time to capture all the feature of the solutions (e.g, interior boundary layers, sharp transitions etc.).   An additional challenge lies in experimental verification given the current state of technology.  Once imaging technology improves combined with advancements in biophysical knowledge, the theory developed in this manuscript can be rigorously tested experimentally both {\it in vitro} and {\it in vivo}.  This will be crucial in verifying model parameter regimes corresponding to biologically realistic results.

This work lays the foundation for future work in understanding inhibited transport along microtubules.  The model we present may be augmented to account for more biological realism in describing motor protein dynamics and intracellular transport. Realistically there are several ``lanes" on these MTs which motor proteins move laterally and they may switch lanes. Similarly, motor proteins may also change directions when encountering patches \cite{Ros08}. Furthermore, transport takes place on highly complex 3-dimensional networks of many MTs and AFs. Hence modeling the intersections between MTs would be of interest as well as analyzing the composite density profiles using the analysis presented in this work. In addition, the cargo transported to and from the cell nucleus and cell wall is carried by motor proteins \cite{Lakadamyali2014,Gro07} and the given model may be augmented to account for this cargo. We also note that motor proteins transfer from MT to MT within the cell, and the model as well as analysis developed here may serve as a foundation for this study.  

Overall, the model for an inhomogeneous microtubule presented here can inform motor protein dynamics in {\it rough} regimes where transport properties are not consistent along given trajectories.  This will ultimately lay the groundwork for fundamental understanding of the onset of neurodegenerative diseases.  The inhomogeneous microtubule model may be used to investigate how one can control transport properties of motor proteins in high density regimes along microtubules.  Given the structure of a microtubule, can one devise conditions so that the equilibrium solution contains no high density regimes (jams) by understanding or imposing defects along its surface?  Also, given a distribution of inhomogeneous regions ($N > 2$) on a microtubule can we predict the equilibrium solution?  The answers to these questions may be the source of further investigation in a future work.

%
%
%

\vspace{0.2 in}

\begin{acknowledgements}
The work of SR was supported by the Cleveland State University Office of Research through a Faculty Research Development Grant.
\end{acknowledgements}

\bibliographystyle{spbasic}

\appendix

\section{Homogeneous microtubules: Proof of Theorem \ref{thm:homog} and Corollary \ref{thm2:homog}}
\label{appendix:homog}

Equation \eqref{homo_mean_field} may be rewritten in the form of a system of two first order ODEs for density $\rho_\eps$ and flux $J_\eps$ (see also \eqref{key_system}): 
\begin{equation}\label{system_ode}
\left\{\begin{array}{rl}
\dfrac{\eps}{2}\rho_\eps' &= -v_0^{-1}J_\eps +\rho_\eps(1-\rho_\eps), \\
J_\eps'&=\Omega_A-(\Omega_A+\Omega_D) \rho_\eps.
\end{array}\right.
\end{equation}

\begin{figure}[]
	\begin{center}
		\includegraphics[width=0.45\textwidth]{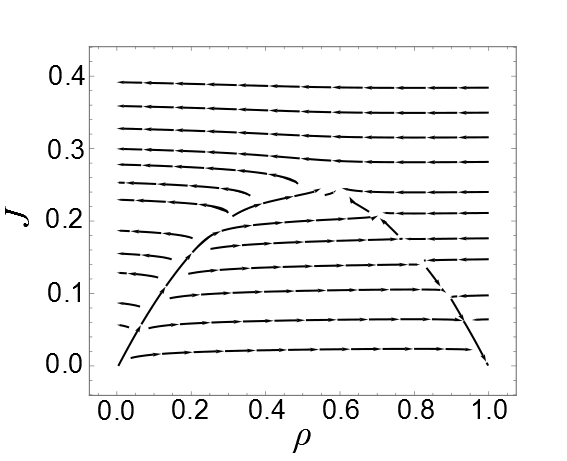}
		\includegraphics[width=0.5\textwidth]{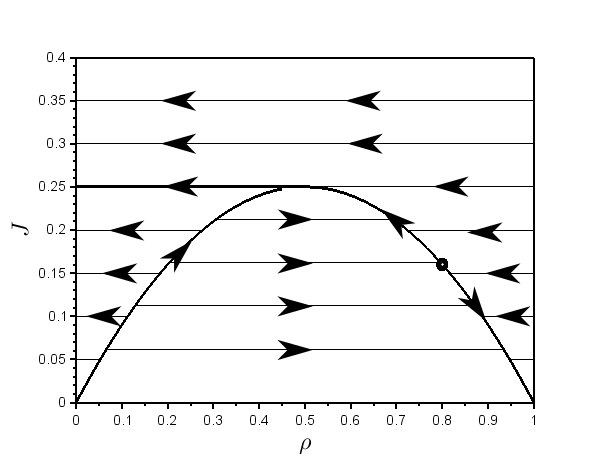}
		\caption{Left: Phase portrait for \eqref{system_ode} with $\eps=0.01$, $\Omega_A=0.7$ and $\Omega_D=0.3$; Right: Sketch of the phase portrait for \eqref{system_ode} with $\eps \ll 1$, the black circle represents the stationary point.}
		\label{fig:sketch}
	\end{center}
\end{figure}

Next, we discuss the phase portrait for this system with $\ve\ll 1$, depicted in Fig.~\ref{fig:sketch}. Away from curve $\gamma$ defined by 
\begin{equation}
\gamma:=\left\{(\rho,J)\left|J=v_0\rho(1-\rho) \text{ and }\begin{array}{l}0\leq \rho\leq 1,\\ 0\leq J \leq v_0/4\end{array}\right.\right\},
\end{equation} 
the trajectories of \eqref{system_ode}, parametrized by $0\leq x\leq \ell$, have almost horizontal slope in $(\rho,J)$ plane.  This is because the slope of $\rho_\ve$ is of the order $\ve^{-1}$, that is $\rho_\ve'(x)\sim \ve^{-1}$, whenever the point $(\rho_\ve(x),J_\ve(x))$ is away from $\gamma$ (it follows from the first equation in \eqref{system_ode}). It would be natural to expect that as $\ve$ vanishes, trajectory $\left\{(\rho_\ve(x),J_\ve(x)),0\leq x \leq \ell \right\}$  approaches the arch $\gamma$ and this trajectory is contained in a given thin neighborhood of $\gamma$ for sufficiently small $\ve$. In this subsection, it will be shown that the behavior of the solution is more complicated than simply evolving near $\gamma$.    

To describe how the solution $\rho_\ve(x)$ behaves for $\ve\ll 1$, we introduce the following notation for parts of curve $\gamma$. Namely,
\begin{eqnarray*}
	\gamma_{l}&:=&\gamma \cap \left\{0\leq \rho< 0.5 \right\},\\
	\gamma_{r,+}&:=&\gamma \cap \left\{ 0.5\leq \rho\leq \rho_\text{eq} \right\},\\
	\gamma_{r,-}&:=&\gamma \cap \left\{ \rho_\text{eq}\leq \rho\leq 1 \right\}.
\end{eqnarray*} 
Here $\rho_\text{eq}:=\Omega_A/(\Omega_A+\Omega_D)$. 
Let us also introduce the following horizontal segment 
\begin{equation*}
\Gamma :=\left\{(\rho,J):J=v_0/4,~ 0\leq \rho\leq 0.5 \right\},
\end{equation*}
and the solution $g(x;s,a)$ to \eqref{def_of_gg}, i.e., the initial value problem of the first order obtained by the formal limit as $\ve \to 0$ in \eqref{homo_mean_field}: 
\begin{equation}
\label{def_of_g}
v_0(1-2g)\partial_x g=\Omega_A-(\Omega_A+\Omega_D)g,~~~g(s;s,a)=a.
\end{equation} 

First, note that $\gamma_l$, which is the left part of the curve $\gamma$, is unstable, that is all trajectories, excluding $\gamma_l$, are directed away from $\gamma_l$ in the vicinity of $\gamma_l$. The right part of the curve $\gamma$, consisting of curve segments $\gamma_{r,+}$ and $\gamma_{r,-}$, is stable, attracting all trajectories in its vicinity, except those that follow $\Gamma$. We note that this exception, when $\gamma_{r,+}$ loses its stability, occurs at the interface point $(\rho=1/2,J=v_0/4)$ where $\gamma_{r,+}$ meets $\gamma_l$.  All trajectories  reaching this point near (not necessarily intersecting)  the curves $\gamma_{r,+}$ and $\gamma_l$ continue along $\Gamma$. 

Given specific values of $\alpha,\beta \in (0,1)$ in boundary conditions \eqref{homo_bc}, the statement of Theorem~\ref{thm:homog} as well as representation formula \eqref{construction_of_outer_solution} can be simply verified by careful inspection of the phase portrait depicted in Fig.~\ref{fig:sketch}. Specifically, for all $0<\alpha,\beta<1$, one can draw a path $\left\{(\rho(x),J(x)):0\leq x\leq \ell\right\}$ along arrows in Fig.~\ref{fig:sketch} (right), which starts at vertical line $\rho=\alpha$ and ends at vertical line $\rho=1-\beta$, and such a path will be unique for given $\alpha$ and $\beta$ (see also left column of Fig. \ref{fig:examples} for specific examples). Instead of checking each couple $(\alpha,\beta)$, one  would split ranges of $(\alpha,\beta)$ into sub-domains within which the outer solution has constant or smoothly varying shape, as it is done in proof below. 

\medskip 

%



\noindent {\it Proof of Theorem~\ref{thm:homog}.} Consider the following functions: 
\begin{equation*}
\rho_\alpha(x)=g(x;0,\alpha) \text{ and }
\rho_\beta(x)=g(x;\ell,\max\left\{0.5,1-\beta \right\}).
\end{equation*}
These functions can be thought of as one-sided solutions (i.e., satisfying one of the boundary conditions, either $\rho(0)=\alpha$ or $\rho(\ell)=\max\left\{0.5,1-\beta\right\}$) of Equation \eqref{homo_mean_field} for $\varepsilon=0$. The reason we choose $\rho(\ell)=\max\left\{0.5,1-\beta\right\}$ instead of $\rho_\beta(\ell)=1-\beta$ is because there is no solution continuous at $x=\ell$ with $\rho(\ell)<0.5$ as visible in Fig.~\ref{fig:sketch} (curve $\gamma$ is unstable in region $\left\{0\leq \rho< 0.5 \right\}$).  

Introduce also the corresponding fluxes: 
\begin{equation*}
J_\alpha(x)=v_0\rho_\alpha(x) (1-\rho_{\alpha}(x))\text{ and }J_{\beta}(x)=v_0\rho_\beta(x) (1-\rho_{\beta}(x)).
\end{equation*}
From the definition of function $g$ it follows that $J_\alpha(x)$ and $J_\beta(x)$ are both monotonic functions, and function $J_\beta(x)$ is defined for all $0\leq x < \ell$. Moreover, $J_\beta(x)$ can be extended onto $(-\infty,\ell]$ and 
\begin{equation*}
\lim\limits_{x\to-\infty}J_{\beta}(x)=J_\text{eq}, 
\text{ where }J_\text{eq}:=v_0\dfrac{\Omega_A\Omega_D}{(\Omega_A+\Omega_D)^2}.
\end{equation*}  

Consider case $\alpha\geq 0.5$. From Fig.~\ref{fig:sketch}, it follows that a trajectory emanating for initial point $(\alpha,J)$ for any $0<J<v_0/4$ 
immediately reaches $\gamma_r$ and stays on $\gamma_r\cup \Gamma$ for $0<x\leq \ell$. Thus, at $x=0$ trajectory $\left\{(\rho_0(x),J_0(x)): 0\leq x \leq \ell\right\}$, describing the outer solution, jumps from $(\alpha,J_0(0))$ at $t=0$  to $\gamma_r$:
\begin{equation}
\rho_0(x)=
\left\{
\begin{array}{ll} 
\alpha, & x=0,\\
\rho_\beta(x),& 0< x \leq \ell.
 \end{array}
 \right.
\label{alpha_more_12}
\end{equation}

In the case where $\alpha< 0.5$, denote by $0\leq x_J\leq \ell$ location at which fluxes $J_\alpha(x)$ and $J_\beta(x)$ intersect, that is, 
\begin{equation}\label{equality_of_fluxes_alpha_and_bet}
J_\alpha(x_J)=J_\beta(x_J).
\end{equation}
Equality \eqref{equality_of_fluxes_alpha_and_bet} implies that either $\rho_\alpha(x_J)=1-\rho_\beta(x_J)$ or $\rho_\alpha(x_J)=\rho_\beta(x_J)$. If $\rho_\alpha(x_J)=\rho_\beta(x_J)$, then since $\rho_\alpha$ and $\rho_\beta$ are solutions of the same first order ordinary differential equation, these two functions coincide $\rho_\alpha(x)\equiv\rho_\beta(x)$.  

We show now that either
\begin{equation}\label{such_xj_at_most_one}
\text{there exists at most one $x_J \leq 1$ or $\rho_\alpha(x)\equiv \rho_{\beta}(x)$.}
\end{equation}
Indeed, since $\alpha<0.5$, trajectory $(\rho_\alpha(x),J_\alpha(x))$ evolves on $\gamma_{l}$ for all $0\leq x \leq \ell$ where solution $\rho_\alpha(x)$ exists, and $J_{\alpha}(x)$ monotonically increases. Trajectory $(\rho_\beta(x),J_\beta(x))$ evolves also for all $0\leq x \leq \ell$ within either $\gamma_{r,+}$ or $\gamma_{r,-}$. If $(\rho_\beta(x),J_\beta(x))$ evolves within $\gamma_{r,-}$, then $J_\beta(x)$ is monotonically decreasing in $x$ whereas $J_\alpha(x)$ is monotonically increasing $x$, and thus equation $J_\alpha(x)=J_\beta(x)$ can have at most one root in this case. If $(\rho_\beta(x),J_\beta(x))$ evolves within $\gamma_{r,+}$, then both $J_\alpha (x)$ and $J_\beta (x)$ increase with $x$. Assume that there are at least two distinct numbers $x_J^{(1)}$, $x_J^{(2)}$ such that $x_J^{(1)}<x_J^{(2)}$ and $J_\alpha(x_J^{(i)})=J_{\beta}(x_J^{(i)})$, $i=1,2$. Assume also that $x_J^{(1)}$ and $x_J^{(2)}$ are neighbor roots of equation $J_\alpha(x)= J_{\beta}(x)$, i.e., for all $x\in (x_J^{(1)},x_J^{(2)})$ we have $J_\alpha(x)\neq J_{\beta}(x)$. Then due to  
\begin{equation*}
\partial_x J=\Omega_A-(\Omega_A+\Omega_D)g, \text{ where }J(x)=v_0 g(x)(1-g(x)) 
\end{equation*}
and $\rho_\alpha(x_J^{(i)})<0.5$ $\rho_\alpha(x_J^{(i)})>0.5$, $i=1,2$, we have that $\partial_x J_\alpha(x_J^{(i)})>\partial_x J_\beta(x_J^{(i)})$, $i=1,2$. Noting that a smooth function can't have the same sign of its derivative at two successive roots we arrive to contradiction. Therefore, such $x_J$ is at most one and \eqref{such_xj_at_most_one} is shown. 

If $J_\alpha(x)\neq J_\beta(x)$  for all $0\leq x \leq 1$, then define $x_J$ as follows: 
\begin{equation*}
x_J=\left\{
\begin{array}{ll} 
0, & J_\beta(x)<J_\alpha(x)\text{ for all }0<x<\ell, \\ 
1, & J_\alpha(x)<J_\beta(x)\text{ for all }0<x<\ell.
\end{array}
\right.
\end{equation*}
We note that point $x=x_J$ is where the outer solution jumps from $\rho_\alpha(x)$ to $\rho_\beta(x)$, thus 
\begin{equation}
\rho_0(x)=\left\{
\begin{array}{ll}
\rho_\alpha(x),&0\leq x<x_J,\\
\rho_\beta(x),&x_J<x<\ell.
\end{array}
\right.
\label{alpha_less_12}
\end{equation}
and $\rho_0(\ell)=1-\beta$.

Formulas \eqref{alpha_more_12}, \eqref{alpha_less_12}, and \eqref{homo_bc} complete the proof of Theorem~\ref{thm:homog}. 

\rightline{$\square$}

%


\section{Examples of solutions given by \eqref{construction_of_outer_solution}} 

To illustrate the result of Theorem \ref{thm:homog} we continue with the following examples. We take $v_0=1$, $\ell=1$, $\Omega_A=0.8$ and $\Omega_D=0.2$, and we vary the boundary rates $\alpha$ and $\beta$. 
 The outer solution for each example, as both a trajectory in $(\rho, J)$ plane  and the plot of $\rho_0(x)$, is depicted in Fig.~\ref{fig:examples}.

\begin{figure}
	\begin{center}
		\includegraphics[width=0.45\textwidth]{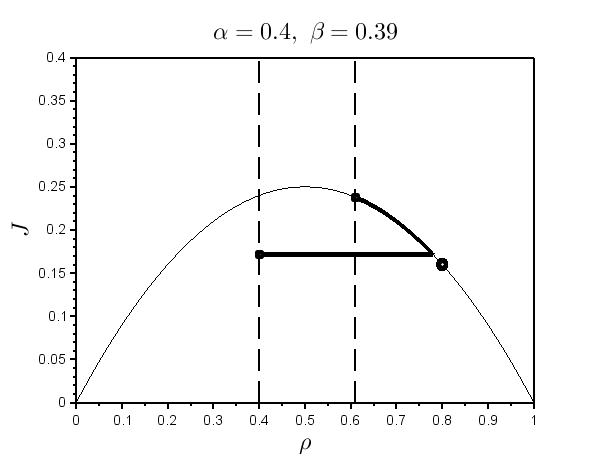}
		\includegraphics[width=0.45\textwidth]{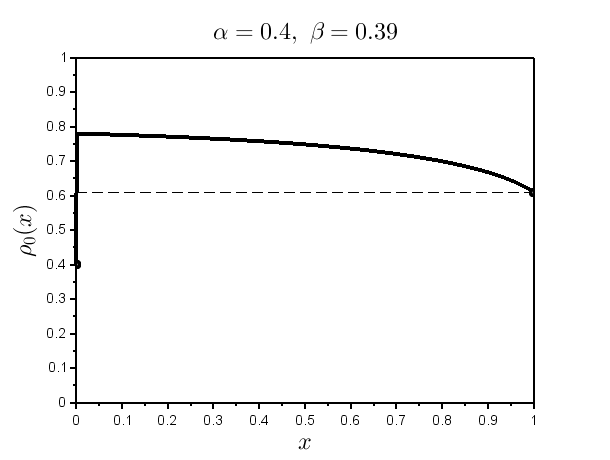}
		\includegraphics[width=0.45\textwidth]{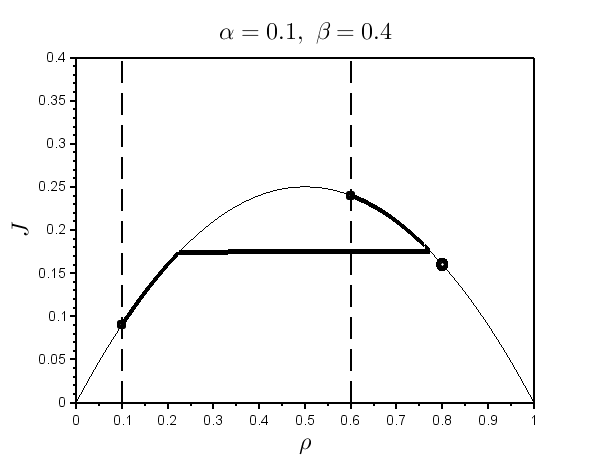}
		\includegraphics[width=0.45\textwidth]{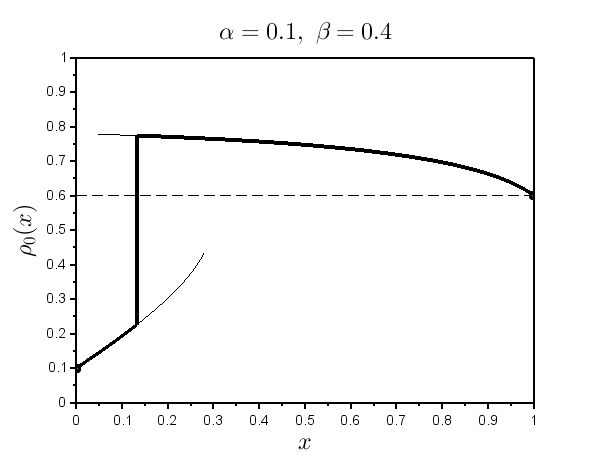}
		\includegraphics[width=0.45\textwidth]{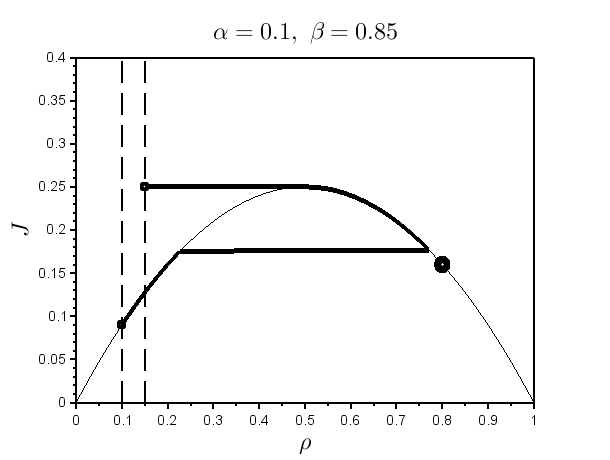}
		\includegraphics[width=0.45\textwidth]{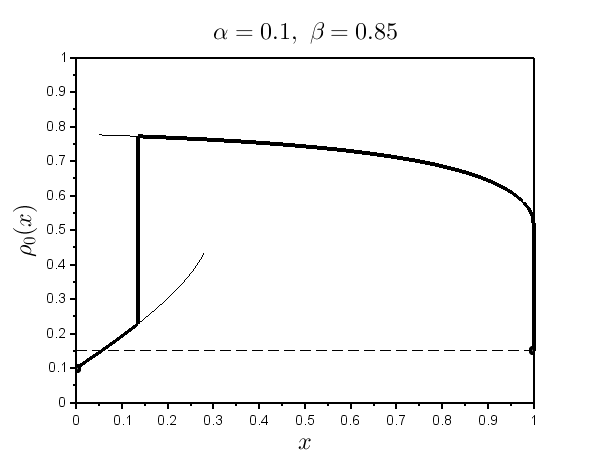}
		\includegraphics[width=0.45\textwidth]{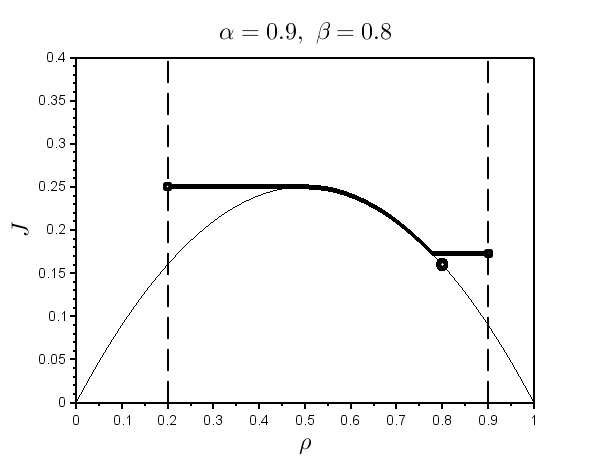}
		\includegraphics[width=0.45\textwidth]{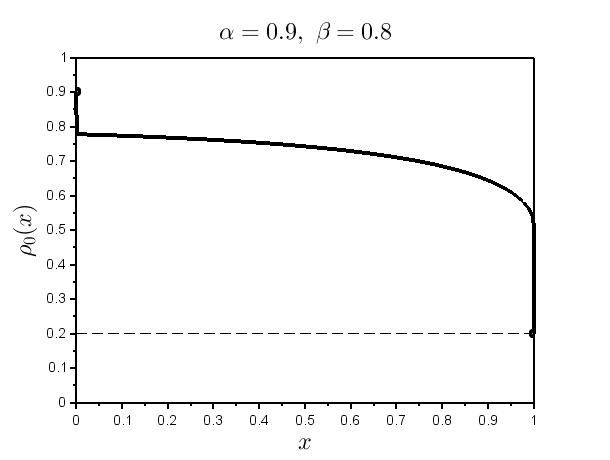}
		\caption{Left: The thick line represents the trajectories from Examples 1-4; it starts at $\rho=\alpha$ and ends at $\rho=1-\beta$, the black circle at (0.8,0.16) represents the stationary solution. Right: The thick line represents the outer solution $\rho_0(x)$ for Examples 1-4. In Examples 2 and 3, branches $g(x;0,\alpha)$ and $g(x;1,\max\{0.5,1-\beta\})$ extend slightly  beyond the intervals where they are a part of the outer solution $\rho_0(x)$ (thin curves).}
		\label{fig:examples}
	\end{center}
\end{figure}

\smallskip 

\noindent{\bf Example 1.} $\alpha=0.4$ and $\beta=0.39$.
\begin{equation*}
\rho_0(x)=\left\{\begin{array}{ll}0.4,&x=0\\ g(x;1,0.61),&0<x\leq 1.\end{array}\right.
\end{equation*}

\smallskip 

\noindent{\bf Example 2.} $\alpha=0.1$ and $\beta=0.4$.
\begin{equation*}
\rho_0(x)=\left\{\begin{array}{ll}
g(x;0,0.1),& 0\leq x \leq x_J,\,x_J\approx 0.133\\
g(x;1,0.6),& x_J< x \leq 1.
\end{array}\right.
\end{equation*}  

\smallskip

\noindent{\bf Example 3.} $\alpha=0.1$ and $\beta=0.85$.

\begin{equation*}
\rho_{0}(x)=\left\{
\begin{array}{ll} g(x;0,0.1),& 0\leq x \leq x_J,\, x_J\approx 0.135,\\ g(x;1,1/2),& x_J<x<1,\\0.15,& x=1.\end{array}\right.
\end{equation*}

\smallskip 

\noindent{\bf Example 4.} $\alpha=0.9$ and $\beta =0.8$.
\begin{equation*}
\rho_{0}(x)=\left\{
\begin{array}{ll} 
0.9,& x=0,\\ 
g(x;1,1/2),& 0<x<1,\\
0.2,& x=1.
\end{array}
\right.
\end{equation*}


The case $x_J>1$ corresponds to the case of fast motor proteins or, more precisely, unidirectional motion dominates attachment/detachment, and thus resulting density is low in MT, $\rho_0(x)<0.5$ for $x\in (0,1)$. Consider the following example: 

\smallskip 

\noindent {\bf Example 5.} $\alpha=0.05$, $\beta=0.85$, $\Omega_A=0.16$ and $\Omega_D=0.04$.
\begin{equation*}
\rho_0(x)=\left\{
\begin{array}{ll} 
g(x,0,\alpha), & 0\leq x<1,\\
1-\beta, & x=1. 
\end{array}
\right.
\end{equation*}
The solution is depicted in Fig.~\ref{fig:fast}.

\begin{figure}
	\includegraphics[width=0.45\textwidth]{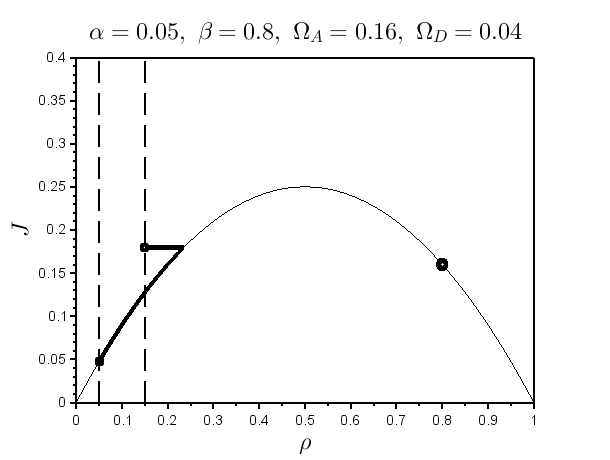}
	\includegraphics[width=0.45\textwidth]{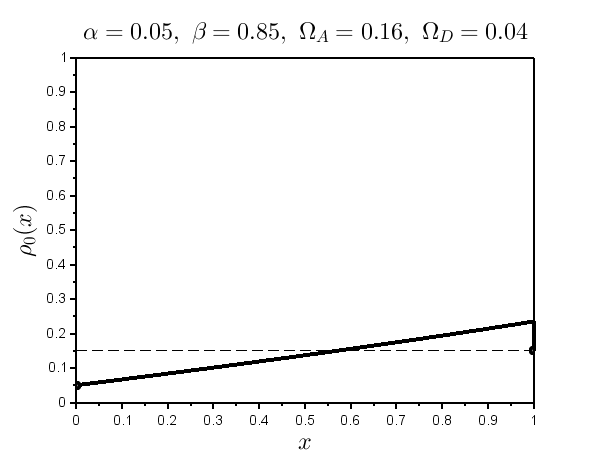}
	\caption{The thick line represents the trajectories from Example 5; it starts at $\rho=\alpha$ and ends at $\rho=1-\beta$, the black circle at (0.8,0.16) represents the stationary solution. Right: The thick line represents the outer solution $\rho_0(x)$ for Example 5.}
	\label{fig:fast}
\end{figure}

 \end{document}